\newtheorem{strategy}{Strategy}    
\newcolumntype{P}[1]{>{\centering\arraybackslash}p{#1}}
\newcolumntype{M}[1]{>{\centering\arraybackslash}m{#1}}
\newcolumntype{R}[1]{>{\arraybackslash}m{#1}}
\definecolor{orange}{rgb}{1,0.5,0}
\definecolor{graynode}{RGB}{20,20,20}
\definecolor{crimsonred}{RGB}{220,20,60}
\definecolor{darkgraynode}{gray}{0.5}
\definecolor{lightgraynode}{gray}{0.8}
\definecolor{gray}{RGB}{20,20,20}
\definecolor{gray}{RGB}{0.7,0.7,0.7}
\definecolor{greencm}{RGB}{0,153,0}
\definecolor{plotblue}{RGB}	{30,144,255}
\definecolor{plotgreen}{RGB}	{50,205,50}
\definecolor{plotred}{RGB}	{220,20,60}
\definecolor{myyellow}{RGB}{255,255,204}
\definecolor{myred}{RGB}{255,204,204}
\definecolor{myblue}{RGB}{0,200,255}
\definecolor{mygreen}{RGB}{80,220,80}
\newcommand*\hrulefillvar[1][0.4pt]{\leavevmode\leaders\hrule height#1\hfill\kern0pt}
\definecolor{thedarkblue}{RGB}{0,0,120} 
\definecolor{mydarkblue}{rgb}{0,0.08,0.45} 
\begin{document}

\title{Utility-Driven Mining of Trend Information for Intelligent System}

\author{Wensheng Gan}
\affiliation{%
		\institution{Harbin Institute of Technology (Shenzhen)}
		\city{Shenzhen}
		\country{China}
}
\email{wsgan001@gmail.com}

\author{Jerry Chun-Wei Lin}
\authornote{Corresponding author}
\affiliation{%
	\institution{Western Norway University of Applied Sciences (HVL)}
	\city{Bergen}
	\country{Norway}
}
\email{jerrylin@ieee.org}

\author{Han-Chieh Chao}
\affiliation{%
	\institution{National Dong Hwa University}
	\city{Hualien}
	\country{Taiwan}
}
\email{hcc@ndhu.edu.tw}

\author{Philippe Fournier-Viger}
\affiliation{%
	\institution{Harbin Institute of Technology (Shenzhen)}
	\city{Shenzhen}
	\country{China}	
}
\email{philfv@hitsz.edu.cn}

\author{Xuan Wang}
\affiliation{%
	\institution{Harbin Institute of Technology (Shenzhen)}
	\city{Shenzhen}
	\country{China}	
}
\email{wangxuan@cs.hitsz.edu.cn}

\author{Philip S. Yu}
\affiliation{%
	\institution{University of Illinois at Chicago}
	\city{Chicago}
	\country{USA}
}
\email{psyu@uic.edu}

\renewcommand\shortauthors{W. Gan et al.}

\begin{abstract}

Useful knowledge, embedded in a database, is likely to change over time. Identifying recent changes in temporal databases can provide valuable up-to-date information to decision-makers. Nevertheless, techniques for mining high-utility patterns (HUPs) seldom consider  recency as a criterion to discover patterns. Thus, the traditional utility mining framework is inadequate for obtaining up-to-date insights about real world data. In this paper, we address this issue by introducing a novel framework, named utility-driven mining of Recent/trend high-Utility Patterns (RUP) in temporal databases for intelligent systems, based on user-specified minimum recency and minimum utility thresholds. The utility-driven RUP algorithm is based on novel global and conditional downward closure properties, and a recency-utility tree. Moreover, it adopts a vertical compact recency-utility list structure to store the information required by the mining process.  The developed RUP algorithm recursively discovers recent HUPs. It is also fast and consumes a small amount of memory due to its pattern discovery approach that does not generate candidates. Two improved versions of the algorithm with additional pruning strategies are also designed to speed up the discovery of patterns by reducing the search space. Results of a substantial experimental evaluation show that the proposed algorithm can efficiently identify all recent high-utility patterns in large-scale databases, and  that the improved  algorithm performs best.

\end{abstract}

%
%
\begin{CCSXML}
<ccs2012>
 <concept>
<concept_id>10002951.10003227.10003351</concept_id>
<concept_desc>Information systems~Data mining</concept_desc>
<concept_significance>500</concept_significance>
</concept>

</ccs2012>
\end{CCSXML}

\ccsdesc[500]{Information Systems~Data mining}

\ccsdesc[500]{Applied Computing~Business intelligence} 

\keywords{Economic behavior, utility theory, temporal database, high-utility patterns, recency.}

\maketitle


\section{Introduction}
Knowledge Discovery in Database (KDD) \cite{agrawal1993database,chen1996data,gan2017data,gan2019survey} aims at extracting meaningful and useful information from massive amounts of data \cite{agrawal1994fast,han2004mining,gan2019survey}. Frequent pattern mining (FPM) and association rule mining (ARM) \cite{agrawal1993database,agrawal1994fast,han2004mining} are some of the most important and fundamental KDD techniques \cite{gan2017data}, which satisfy the requirements of applications in numerous domains. Mining frequent patterns consists of discovering all the frequent patterns in a given database based on a user-specified minimum support threshold. Different from the tasks of FPM and ARM, high-utility pattern mining (HUPM) \cite{chan2003mining,yao2004foundational,lin2016efficient,liu2012mining,liu2005two} takes into account both purchase quantities and profit values of items to measure how ``useful''  items and patterns are. A pattern is considered to be a high-utility pattern (HUP) if its utility  in a database is no less than a user-specified minimum utility count. The ``utility'' measure can be generally viewed as a measure of the importance of patterns to the user, for example, in terms of cost, risk, or unit profit. Chan \textit{et al.} \cite{chan2003mining} first established the framework of HUPM. Yao \textit{et al.} \cite{yao2004foundational} then defined a unified framework for mining high-utility patterns (HUPs). HUPM is useful as it can identify patterns that may be infrequent in transactions, but that are profitable, and thus highly valuable to retailers or managers. HUPM is a key data analysis task that has been widely applied to discover useful knowledge in different types of massive data. In order to improve the mining efficiency of utility mining, many HUPM approaches have been proposed, such as Two-Phase \cite{liu2005two}, IHUP \cite{ahmed2009efficient}, UP-growth+ \cite{tseng2013efficient}, HUI-Miner \cite{liu2012mining}, and EFIM \cite{zida2017efim}. In addition, many interesting issues related to the effectiveness of utility-driven pattern mining have been extensively studied, as summarized in \cite{gan2018survey}.

However, an important drawback of previous studies is that they utilize the minimum utility threshold as sole constraint for discovering HUPs, and ignores the temporal aspect of databases. In general, knowledge found in a temporal database   changes as time goes by. Extracting up-to-date knowledge, especially from temporal data, can provide valuable information for many real-world applications, such as decision making \cite{chang2003finding,hong2009effective} and event detection \cite{batal2012mining}. Although HUPs can reveal  information  that is often more useful than frequent patterns, HUPM does not assess how recent the discovered patterns are. As a result, the discovered HUPs may be irrelevant or even misleading if they are out-of-date in many real-world use-cases. For example, if a pattern only appears in the far past, it may possibly be invalid at present, and thus be useless for decision making. Hence, it is unfair to measure the interestingness of a pattern without considering how recent the pattern is. For monitoring and event detection problems \cite{batal2012mining}, the temporal patterns in Electronic Health Record (EHR)  are potentially more useful for monitoring  a specific patient's condition. For market basket analysis, obtaining  information about recent or current sale trends is crucial, and much more important than gaining information about previous sale trends. Managers and retailers may use up-to-date patterns to take strategic business decisions, while out-of-date patterns may be useless or even misleading for this purpose. For example, consider that transactions in  a retail store contain the following five items: \{\textit{bread}\}, \{\textit{ice cream}\}, \{\textit{mitten}\}, \{\textit{sweater}\} and \{\textit{sunscreen}\}. 
Items such as \{\textit{ice cream}\} and \{\textit{sunscreen}\} are generally best selling during the summer, while  items such as \{\textit{mitten}\} and \{\textit{sweater}\} have stronger sales during the winter. 
Hence, for the purpose of market analysis, it is more valuable  to discover up-to-date and interesting patterns that have sold well recently indicating current hot sale trends,  than  patterns that have mostly appeared in the past. Because high-utility patterns  mostly sold  in the  past are generally irrelevant, traditional HUPM algorithms are  inappropriate for obtaining a set of up-to-date patterns representing recent trends.

Up to now, few studies have addressed the problem of mining recent HUPs by considering time sensitive constraint. To address the lack of consideration for the temporal aspect of real-world data in HUPM, the UDHUP-apriori and UDHUP-list algorithms \cite{lin2015efficient} have been proposed. These algorithms find up-to-date patterns having high utilities in databases. To the best of our knowledge,  UDHUP-apriori and UDHUP-list \cite{lin2015efficient} are the first algorithms that address this important issue of mining up-to-date high-utility patterns (UDHUPs). An UDHUP is a pattern which may not be a HUP when considering an entire transaction database, but that has been recently highly profitable. In other words, UDHUP mining algorithms  focus on finding  recently profitable patterns rather than patterns that have been globally profitable. Discovering UDHUPs is a difficult problem as these algorithms may face a ``combinatorial explosion'' of patterns since the number of recent patterns may be very large, especially for databases having many long transactions or when the minimum utility threshold is set low. 

To resolve the problem of utility-driven efficiently and effectively mining of trend information for intelligent system, a tree-based algorithm named mining Recent high-Utility Patterns in temporal databases (RUP) for intelligent systems is presented in this paper. The major contributions of this study are summarized as follows:

\begin{itemize}
	\item A novel utility-driven mining approach named mining Recent high-Utility Patterns in temporal databases (RUP) is proposed to reveal useful and meaningful recent high-utility patterns (RHUPs) by considering temporal factor. These patterns are more useful for real-life applications than discovering HUPs.
	
	\item The developed RUP approach spans a Set-enumeration tree named Recency-Utility tree (RU-tree). Using the designed recency-utility list (RU-list) structure, RUP only performs two database scans to find RHUPs and can avoid generating a huge number of candidate patterns.
	
	\item Two properties, named the global downward closure (GDC) property and the conditional downward closure (CDC) property,  ensures global and partial anti-monotonicity for mining RHUPs in the RU-tree. Moreover, additional effective pruning strategies are integrated in improved versions of RUP. RUP algorithm can greatly reduce the search space and thus speed up the discovery of RHUPs.	
	
	\item Substantial experiments  show that the proposed RUP algorithm and its improved versions can efficiently identify recent high-utility patterns appearing in recent transactions, and that the developed algorithm outperform the conventional algorithm (e.g., FHM) for HUPM. Moreover, it is shown that the improved variations of RUP outperforms the baseline.
\end{itemize}

Note that this paper is extended by a preliminary version \cite{gan2016mining}. The rest of this paper is organized as follows. Related work is  reviewed in Section 2. Preliminaries related to recent high-utility pattern and the problem statement are described in Section 3. Section 4 presents the proposed RUP algorithm. An experimental evaluation of its performance is provided in Section 5. Lastly, a conclusion is drawn in Section 6.

\section{Related Work}

\subsection{Efficiency Issue for Utility Mining}

Based on the numerous studies, it has been shown that pattern mining is useful in many real-world applications, such as decision making in intelligent systems (i.e., stock investment) \cite{lin2015discovering}, condition monitoring \cite{batal2012mining}, event detection \cite{batal2012mining}, pattern-based prediction and classification \cite{cheng2007discriminative}. 
HUPM is different from frequent pattern mining (FPM) since it takes purchase quantities and unit profits of items into account to determine the importance of  patterns, rather than simply considering their occurrence frequencies. Several studies have been carried out on HUPM. Chan \textit{et al.} \cite{chan2003mining} presented a framework to mine the top-\textit{k} closed utility patterns based on business objectives. Yao \textit{et al.} \cite{yao2004foundational} defined utility mining as the problem of finding profitable patterns while considering both the purchase quantities of items in transactions (internal utilities) and their unit profits (external utilities). Liu \textit{et al.} \cite{liu2005two} then presented a two phase algorithm,  adopting the transaction-weighted downward closure (\textit{TWDC}) property to efficiently discover HUPs. This approach has been named the transaction-weighted utilization (\textit{TWU}) model. Tseng \textit{et al.} \cite{tseng2010up} then proposed the  UP-growth+  algorithm to mine HUPs using an UP-tree structure. Different from previous approaches,  a novel utility-list-based algorithm named HUI-Miner \cite{liu2012mining} was designed to efficiently discover HUPs without generating candidates. This  algorithm greatly reduces the time required for discovering HUPs by not generating candidates. Enhanced versions of HUI-Miner, called FHM \cite{fournier2014fhm} and HUP-Miner \cite{krishnamoorthy2015pruning}, were respectively proposed, integrating additional effective pruning strategies. Experiments have shown that both FHM \cite{fournier2014fhm} and HUP-Miner \cite{krishnamoorthy2015pruning} significantly outperform the existing  HUPM algorithms for most static databases. Recently, several other novel approaches called EFIM \cite{zida2017efim} and HMiner \cite{krishnamoorthy2017hminer} were designed for mining HUPs. They were the state-of-the-art algorithms and shown to outperform previous methods. Consider the sequential data, the topic of high-utility sequential pattern  mining also has been studied, such as USpan \cite{yin2012uspan}, HUS-Span \cite{wang2016efficiently}, ProUM \cite{gan2019proum}, and HUSP-ULL \cite{gan2019fast}. Developing efficient high-utility pattern mining algorithms is an active research area, and more recent studies can be referred to review literature \cite{gan2018survey}.

\subsection{Effectiveness Issue for Utility Mining}

In addition to the above efficiency issue, many interesting effectiveness issues related to HUPM have been studied \cite{gan2018survey,lin2016efficient,lin2017fdhup}. A series of approaches for discovering HUPs in different types of dynamic data \cite{2gan2018survey,2lin2015mining,yun2017efficient} or uncertain data \cite{lin2016efficient} have been presented. Recently, a lattice-based method was presented for mining high-utility association rules \cite{mai2017lattice}. The FHN \cite{lin2016fhn} and GHUM \cite{krishnamoorthy2018efficiently} algorithms were proposed to discover HUPs when considering both positive and negative unit profit values. Lan \textit{et al.} \cite{lan2011discovery} first proposed a model for mining on-shelf HUPs, that is to discover profitable sets of products while considering their shelf-time. Algorithms have also been proposed to identify high utility occupancy patterns \cite{gan2018huopm} and find the top-$k$ HUPs \cite{tseng2016efficient,dam2017efficient} without setting minimum utility threshold.  Lee \textit{et al.} \cite{lee2013utility} proposed the task of mining utility-based association rule with applications to cross-selling. Besides, the topic of utility mining with different discount strategies in cross-selling has been studied \cite{lin2016fast}. Yun \textit{et al.} \cite{yun2018damped} developed several approaches to discover the high-utility patterns over data streams.  Another popular extension of the task of HUPM is to discover high average utility patterns, where the utility of a pattern is divided by the number of items that it contains \cite{hong2011effective,truong2018efficient,wu2018tub}. It has been argued that this measure provide a more fair measurement of the utilities of patterns. Because the set of high-utility patterns does not mean that they are correlated, Gan \textit{et al.} then introduced the interesting topic of correlated-based utility mining \cite{gan2017extracting}. Up to now, some studies of privacy-preserving utility mining also have been addressed, as reviewed in \cite{gan2018privacy}.

\subsection{Comparative Analysis with Related Work}

Information discovered by HUPM can be used to feed expert and intelligent systems and is more valuable than frequent patterns traditionally found in the field of pattern mining. However, most HUPM algorithms ignore  time factor, and thus do not consider how recent patterns are. In real-world situations, knowledge embedded in a database may change at any time. Hence,  discovered HUPs may be out-of-date and possibly invalid at present. Mining up-to-date information in temporal databases can provide valuable information to decision-makers. A new framework for mining up-to-date high-utility patterns (UDHUPs) \cite{lin2015efficient} was proposed to reveal useful and meaningful HUPs, by considering both the utility and the recency of patterns. The UDHUP algorithm extracts patterns which may not be globally profitable but that have been highly profitable in recent times. To the best of our knowledge, this is the first work to address the problem of mining up-to-date high-utility patterns. Discovering UDHUPs is a difficult problem as algorithms can face a ``combinatorial explosion'' of patterns since the number of recent patterns may be very large, especially for databases having many long transactions or when the minimum utility threshold is set low. Thus, it is a critical and challenging issue to discover recent HUPs more efficiently.

\section{Preliminaries and Problem Statement}
\subsection{Preliminaries}

Let \textit{I} = \{\textit{i}$_{1}$, \textit{i}$_{2}$, $\ldots$, \textit{i$_{m}$}\} be the set of the \textit{m} distinct items appearing in a temporal transactional database \textit{D} = \{\textit{T}$_{1}$, \textit{T}$_{2}$, $\ldots$, \textit{T$_{n}$}\}, such that each quantity transaction, denoted as \textit{T$_{q}$} $ \in $ \textit{D}, is a subset of \textit{I}, and has a unique transaction identifier (\textit{TID}) and a timestamp. Each item $i_{j} \in I$ is associated with a unit profit value $ pr(i_{j})$ representing the profit generated by the sale of a unit of item $i_{j}$. Unit profit values of items are stored in a profit-table \textit{ptable} = \{$ pr(i_{1}) $, $ pr(i_{2}) $, $\dots$, $ pr(i_{m}) $\}. Note that here the profit can also be defined as risk, interestingness, satisfaction, and other factor. An itemset/pattern \textit{X} $\in$ \textit{I} having \textit{k} distinct items \{$ i_{1} $, $ i_{2} $, $\dots$, $ i_{k} $\} is said to be  a \textit{k}-pattern. For an itemset/pattern $X$, the notation \textit{TIDs}$(X)$ denotes the \textit{TIDs} of all transactions in $D$ containing $X$.

Consider a time-varying e-commerce database\footnote{\url{https://recsys.acm.org/recsys15/challenge/}} that containing a set of temporal purchase behaviors. Then we illustrate the concept of RHUP using a simple e-commerce database presented in Table \ref{db:example}, which will be used as running example. This time-varying e-commerce database contains 10 purchase behaviors (transactions), sorted by purchase time. Moreover, assume that the corresponding \textit{ptable} of each product in Table \ref{db:example} is defined as \{\textit{pr}(\textit{a}): \$6, \textit{pr}(\textit{b}): \$1, \textit{pr}(\textit{c}): \$10, \textit{pr}(\textit{d}): \$7, \textit{pr}(\textit{e}): \$5\}.

\begin{table}[!htbp]
	\setlength{\abovecaptionskip}{0pt}
	\setlength{\belowcaptionskip}{0pt} 	
	\caption{An example transactional database}
	\label{db:example}
	\centering
		\begin{tabular}{|c|c|c|}
			\hline
			\textbf{TID} & \textbf{Timestamp}  & \textbf{Items with occurred quantities} \\ \hline \hline
			$ T_{1} $ &	2016/1/2 09:30   & 	\textit{a}:2, \textit{c}:1, \textit{d}:2 \\ \hline
			$ T_{2} $ &	2016/1/2 10:20   & 	\textit{b}:1, \textit{d}:2 \\ \hline
			$ T_{3} $ &	2016/1/3 19:35   &	\textit{b}:2, \textit{c}:1, \textit{e}:3 \\ \hline
			$ T_{4} $ &	2016/1/3 20:20   &	\textit{a}:3, \textit{c}:2 \\ \hline
			$ T_{5} $ &	2016/1/5 10:00   &	\textit{a}:1, \textit{b}:3, \textit{d}:4, \textit{e}:1 \\ \hline
			$ T_{6} $ &	2016/1/5 13:45   &	\textit{b}:4, \textit{e}:1 \\ \hline
			$ T_{7} $ &	2016/1/6 09:10   &	\textit{a}:3, \textit{c}:3, \textit{d}:2 \\ \hline
			$ T_{8} $ &	2016/1/6 09:44   &	\textit{b}:2, \textit{d}:3 \\ \hline
			$ T_{9} $ &	2016/1/6 16:10   &	\textit{c}:1, \textit{d}:2, \textit{e}:2 \\ \hline
			$ T_{10} $ & 2016/1/8 10:35 	&	\textit{a}:2, \textit{c}:2, \textit{d}:1 \\ \hline	
		\end{tabular}	
\end{table}

\begin{definition}
	\label{def_1}
	\rm The recency of a quantity transaction  $T_{q}$ is denoted as $ r(T_{q}) $ and defined as: $r(T_{q}) $ = $ (1-\delta)^{(T_{current}-T_{q})} $, where $ \delta $ is a user-specified time-decay factor ($ \delta \in$ (0,1]), $ T_{current} $ is the current timestamp which is equal to the number of transactions in $D$, and $T_q$ is the \textit{TID} of  transaction $T_q$. Then the recency of an itemset/pattern $X$ in a transaction $T_{q}$ is denoted as $ r(X, T_{q}) $ and defined as: $ r(X, T_{q}) $ = $r(T_{q}) $ = $(1-\delta)^{(T_{current}-T_{q})}$. The recency of an itemset/pattern $X$ in a database  $ D $ is denoted as $ r(X) $ and defined as: $ r(X)$ = $\sum _{X\subseteq T_{q} \wedge T_{q}\in D}r(X,T_{q})$.
\end{definition}

Note that the time-decay factor $ \delta $ is based on the prior knowledge. Thus, higher recency values are assigned to transactions having timestamps that are close to the most recent timestamp. For instance, assume that $ \delta $ is set to 0.1. The recency values of $T_1$ and $T_8$ are respectively calculated as $r(T_1)$ = $(1 - 0.1)^{(10-1)}$ = 0.3874, and $r(T_8)$ = $(1 - 0.1)^{(10-8)}$ = 0.8100. The recency values of all transactions in $D$ are $r(T_1)$ = 0.3874, $r(T_2)$ = 0.4305, $r(T_3)$ = 0.4783, $r(T_4)$ = 0.5314, $r(T_5)$ = 0.5905, $r(T_6)$ = 0.6561, $r(T_7)$ = 0.7290, $r(T_8)$ = 0.8100, $r(T_9)$ = 0.9000, and $ r(T_{10})$ = 1.000. Therefore, the recency of the 1-pattern $(b)$ in $T_2$ is calculated as $ r(b, T_2)$ = $r(T_2)$ = 0.4305, and the recency of the 2-pattern $ (bde) $ in $T_5$ is calculated as $ r(bde, T_5)$ = $r(T_5)$ = 0.5905. Consider patterns $ (be) $ and $ (bde) $ in $D$, their recency values are respectively calculated as $ r(be)$ = $r(be, T_3)$ + $r(be, T_5)$ + $r(be, T_6) $ = 0.4783 + 0.5905 + 0.6561 = 1.7249 and $ r(bde)$ = $r(bde, T_{5}) $ = 0.5905. To the best of our knowledge, incorporating the concept of recency in temporal utility mining  has not been previously explored in the utility mining literature, except for the concept to of up-to-date high-utility pattern (UDHUP) \cite{lin2015efficient}.

\begin{definition}
	\label{def_3}
	\rm The utility of an item $ i_{j} $ appearing in a quantity transaction $ T_{q} $ is denoted as $u(i_{j}, T_{q}) $, and is defined as: $ u(i_{j}, T_{q})$ = $ q(i_{j}, T_{q}) \times pr(i_{j}) $, in which $q(i_{j}, T_{q})$ is the quantity of $ i_{j} $ in $ T_{q} $, and  $pr(i_{j})$ is the unit price/utility of the item $ i_{j} $. The utility of an itemset/pattern \textit{X} in a transaction $ T_{q} $ is denoted as $ u(X, T_{q}) $ and defined as: $ u(X, T_{q})$ = $\sum _{i_{j}\in X\wedge X\subseteq T_{q}}u(i_{j}, T_{q})$. Let $ u(X) $ denote the utility of a pattern \textit{X} in a database \textit{D}, then it can be  defined as: $u(X)$ = $ \sum_{X\subseteq T_{q}\wedge T_{q} \in D} u(X, T_{q})$.
\end{definition}

For example, the utility of the item (\textit{c}) in transaction $ T_{1} $ is calculated as $ u(c, T_{1})$ = $q(c, T_{1})\times pr(c) $ = 1 $ \times$ \$10 = \$10. And the utility of the pattern $ (ad) $ is calculated as $ u(ad, T_{1}) $ = $ u(a, T_{1}) $ + $ u(d, T_{1}) $ = $ q(a, T_{1}) \times pr(a)$ + $ q(d, T_{1})\times pr(d) $ = $ 2 \times \$6 $ + $ 2 \times \$7 $ = 26. Correspondingly, the utility of the pattern $ (acd) $ is calculated as $ u(acd) $ = $ u(acd, T_{1}) $ + $ u(acd, T_{7}) $ + $ u(acd, T_{10}) $ = \$36 + \$62 + \$39 = \$137.

\begin{definition}
	\label{def_7}
	\rm The transaction utility of a transaction $ T_{q} $ is denoted as $ tu(T_{q}) $ and defined as: $tu(T_{q})$ = $ \sum_{i_{j}\in T_{q}}u(i_{j}, T_{q})$, where $j$ is the number of items in $T_{q}$. The total utility in a database \textit{D} is the sum of all transaction utilities in \textit{D} and is denoted as $TU$. It is formally defined as: $ TU $ = $\sum _{T_{q}\in D}tu(T_{q})$.
\end{definition}

For example, in Table \ref{db:example}, $ tu(T_{3}) $ = $ u(b, T_{3}) $ + $ u(c, T_{3}) $ + $ u(e, T_{3}) $ = \$2 + \$10 + \$15 = \$27. And the transaction utilities of transactions \textit{T}$ _{1} $ to \textit{T}$ _{10} $ are respectively calculated as \textit{tu}(\textit{T}$ _{1} $) = \$36, \textit{tu}(\textit{T}$ _{2} $) = \$15, \textit{tu}($T_{3} $) = \$27, \textit{tu}($T_{4}$) = \$38, \textit{tu}($T_{5}$) = \$42, \textit{tu}($T_{6}$) = \$9, \textit{tu}($T_{7}$) = \$62, \textit{tu}($T_{8}$) = \$23, \textit{tu}($T_{9} $) = \$34, and \textit{tu}(\textit{T}$ _{10} $) = \$39. The total utility in \textit{D} is the sum of all transaction utilities in \textit{D}, which is calculated as: $TU$ = (\$36 + \$15 + \$27 + \$38 + \$42 + \$9 + \$62 + \$23 + \$34 + \$39) = \$325.

\begin{definition}
	\label{def_9}
	\rm A pattern \textit{X} in a database is a high-utility pattern (HUP) iff its utility is no less than the minimum utility threshold (\textit{minUtil}) multiplied by the total utility of database, that is:
	\begin{equation}
		HUP\leftarrow\{X|u(X)\geq minUtil \times TU\}.
	\end{equation}
\end{definition}

\begin{definition}
	\label{def_10}
	\rm A pattern \textit{X} in a database  $ D $ is said to be a recent high-utility pattern (RHUP) if it satisfies two conditions: 
	\begin{equation}
	    RHUP\leftarrow\{X|r(X) \geq minRe \cap u(X)\geq minUtil \times TU\},
	\end{equation}
 where $ minUtil $ and $ minRe $ are respectively called the minimum utility threshold and  the minimum recency threshold. These two parameters are specified by the user according to his/her preference and prior knowledge.
\end{definition}

Although the definition of HUP is similar to that of RHUP, a key difference is that RHUPs are discovered in databases where transactions are sorted by ascending order of timestamps and that this information is used to evaluate how recent patterns are. Note that a different transaction ordering would result in a different set of RHUPs. The ascending order of timestamps is adopted because based on the definitions of RHUPs, recent patterns are considered to be more interesting than old ones. 

Next, we analyze the relationship between HUPs and RHUPs. For the running example, assume that $ minRe $ and $ minUtil $ are respectively set to 1.50 and 10\%.  The pattern $ (abd) $ is a HUP since its utility is $ u(abd) $ = \$57 $> (minUtil \times TU$ = \$32.5), but not a RHUP since its recency is $ r(abd) $ (= 0.5314 $<$ 1.5). All HUPs are shown in Table \ref{table:RHUPs}, where RHUPs are colored in red.

\begin{table}[ht]
	\setlength{\abovecaptionskip}{0pt}
	\setlength{\belowcaptionskip}{0pt} 	
	\caption{Derived HUPs and RHUPs}
	 \label{table:RHUPs} 
	\centering
    \begin{tabular}{|c|c|c|c|c|c|}
		\hline
		\textbf{Pattern} & $\boldsymbol{{r(X)}}$ & $\boldsymbol{{u(X)}}$  & \textbf{Pattern} &	$\boldsymbol{{r(X)}}$	& $\boldsymbol{{u(X)}}$  \\ \hline  \hline	
		\textcolor{red}{\textbf{(\textit{a})}} &	2.9145 &	\$66  &       $ (ce) $ &	1.2405 &	\$45     \\ \hline
		\textcolor{red}{\textbf{(\textit{c})}} &	3.6235 &	\$100	&   	$ (de) $ &	1.3414 &	\$57   	\\ \hline
		\textcolor{red}{\textbf{(\textit{d})}} &	4.3626 &	\$112 &       $ (abd) $ &	0.5314 &	\$37     \\ \hline
		\textcolor{red}{\textbf{(\textit{e})}} &	2.3624 &	\$35	&   \textcolor{red}{\textbf{(\textit{acd})}} &	1.9048 &	\$137  	\\ \hline		
		\textcolor{red}{\textbf{(\textit{ac})}} &	2.3831 &	\$140 &   	$ (ade) $ &	0.5314 &    \$39    \\ \hline
		\textcolor{red}{\textbf{(\textit{ad})}} &	2.4362 &	\$111 &       $ (bde) $ &	0.5314 &	\$36     \\ \hline
		\textcolor{red}{\textbf{(\textit{bd})}} &	1.6479 &	\$69  &   	$ (cde)$  & 0.81   &	\$34     \\ \hline	
		\textcolor{red}{\textbf{(\textit{be})}} &	1.5524 &	\$34  &   	$ (abde)$ & 0.5314 &	\$42    \\ \hline	
		\textcolor{red}{\textbf{(\textit{cd})}} &	2.7148 &	\$119 &  	              &        &  \\ \hline 	
    \end{tabular}
\end{table}

\subsection{Problem Statement}

\begin{definition}
	\rm  Given a quantitative transactional database (\textit{D}), a \textit{ptable}, a user-specified minimum recency threshold (\textit{minRe}) and a minimum utility threshold (\textit{minUtil}), the goal of RHUPM is to efficiently identify the complete set of RHUPs by considering both the recency and utility constraints. Thus, the problem of RHUPM is to find the complete set of RHUPs, in which the utility and recency of each pattern \textit{X} are respectively no less than $ minUtil \times TU $ and \textit{minRec}. 
\end{definition}

Hence, the goal of RHUPM is to efficiently enumerate all RHUPs in a database, while considering both the recency and utility constraints. To achieve this goal, and reduce the size of the search space, items that cannot generate recent patterns and high-utility patterns should be excluded from the mining process. According to \cite{lin2015efficient}, the concept of UDHUP is defined as below:

\begin{definition}
	\label{def_UDHUP}
	\rm The total utility of the database with its past lifetime to the current one is to sum transaction utilities from transactions $\beta$ to $n$, which is denoted as:  $TU_{[\beta,n]} $ =  $ \sum _{q=\beta}^{n} tu(T_{q}) $ \cite{lin2015efficient}, in which $ 1 \leq \beta \leq n $, and $n$ is the number of transactions in $D$. A pattern \textit{X} in a database  $ D $ is called an up-to-date high-utility pattern (UDHUP) \cite{lin2015efficient} if:
	\begin{equation}
	     UDHUP \leftarrow \{X|u(X)_{[\beta,n]} \geq minUtil \times TU_{[\beta,n]}\},
	\end{equation}
	with its lifetime from $ \beta $ to $n$, which can be thus represented as  \{$X$: $u(X)$, $[\beta, n]$\}.
\end{definition}


Thus, UDHUP mining aim at discovering not only HUPs in the entire databases but also the up-to-date HUPs within its lifetime from the past timestamp to the current one, the so-called high-utility of an UDHUP relies on the summation utility of the transactions which appear in the recent interval w.r.t. lifetime. Based on the above analysis, it can be seen that the proposed RHUP is quite different from the UDHUP, the relationship between HUP, UDHUP, and RHUP is \textit{RHUP} $ \subseteq $ \textit{HUP} $ \subseteq $ \textit{UDHUP}, as illustrated in Fig. \ref{relationships}.

The benefit of RHUPM over HUPM is that the former can extract less but more valuable patterns from temporal databases for decision making by considering the recency of patterns. The recent high-utility patterns can be more useful than the frequent patterns in many real-world applications, such as decision making in intelligent systems (e.g., stock investment) \cite{lin2015discovering}, condition monitoring \cite{batal2012mining}, event detection \cite{batal2012mining}, pattern-based prediction and classification \cite{cheng2007discriminative}.

\begin{figure}[hbtp]	
	\centering
	\includegraphics[trim= 0 0 5 0,clip,scale=0.25]{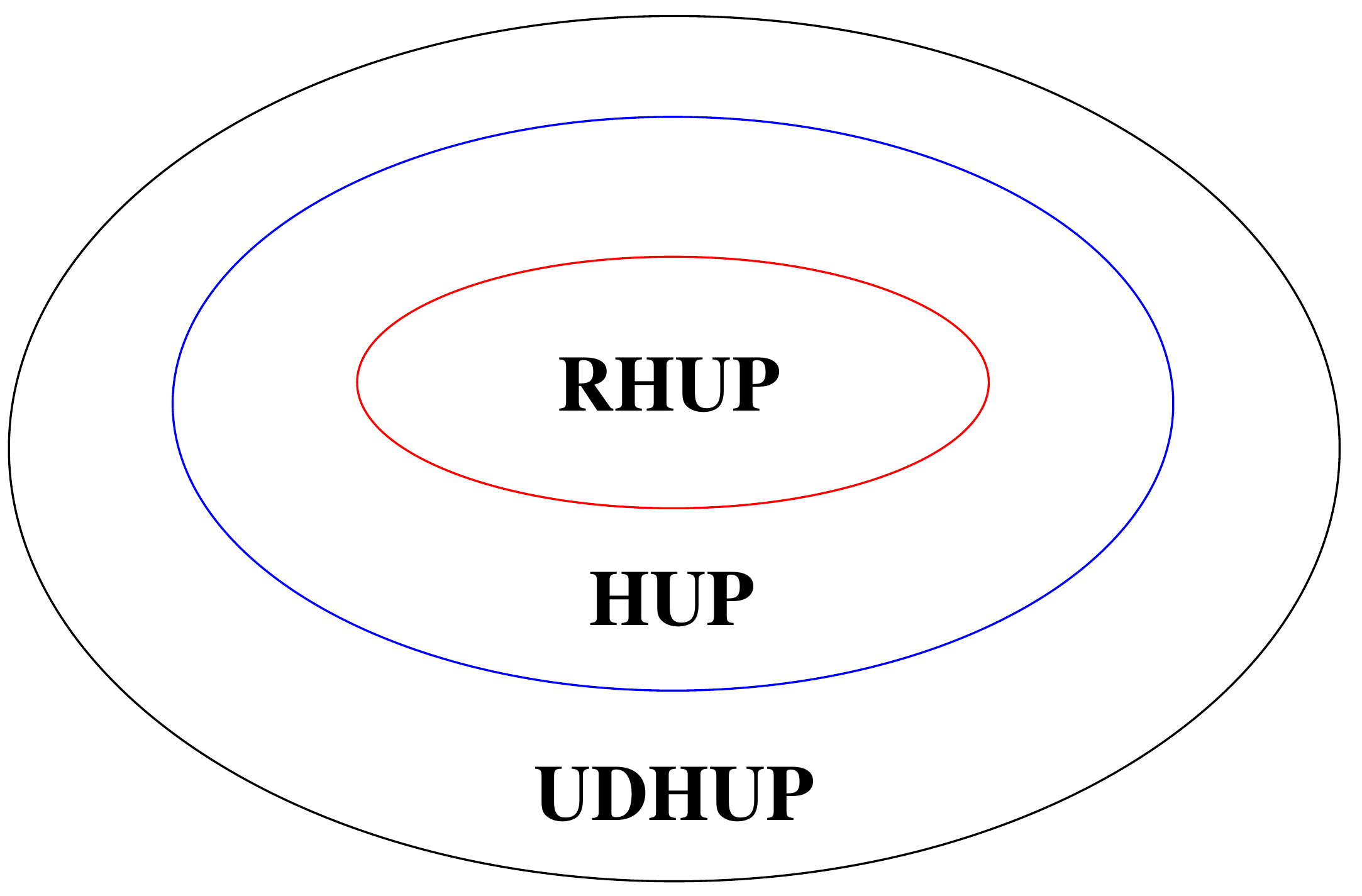}
	\caption{The relationship between HUP, UDHUP, and RHUP.}
	\label{relationships}
\end{figure}

\section{Proposed RUP Algorithm for Mining RHUPs}

In this section, we investigate the properties of RHUPs, propose a recency-utility list (RU-list) structure, and present an efficient tree-based utility mining algorithm named RUP. RUP mines RHUPs using a novel recency-utility (RU)-tree, RU-list structure,  and two downward closure properties of RHUPs. Details are given thereafter.

\subsection{The Proposed RU-tree Structure}

The RU-tree is a tree-based data structure designed for mining the set of RHUPs. Its construction can be done by scanning the database once. The RU-tree maintains information about the recency and utility values of items so that RHUPs can be discovered efficiently. To avoid missing RHUPs, the recency and utility values are stored for each occurrence of a pattern, in the RU-tree.

\begin{definition}[Total order $\prec$ on items]
	\rm Assume that there exists a total order $\prec$ on items, defined as the ascending order of transaction-weighted utilization (\textit{TWU}) \cite{liu2005two}. Let $ TWU(X)$ denote the sum of all transaction utilities containing $ X $ and defined as $ TWU(X) $ = $\sum_{X\subseteq T_{q}\wedge T_{q}\in D}tu(T_{q})$ \cite{liu2005two}.
\end{definition}

For example, the ascending order of \textit{TWU} for items in the running example is  \textit{TWU}($e$): \$112 $ < $ \textit{TWU}($b$): \$116 $ < $ \textit{TWU}($a$): \$217 $ < $ \textit{TWU}($c$): \$236 $ < $ \textit{TWU}($d$): \$251. Hence, the total order $\prec$ on items in the RU-tree is $ e \prec b \prec a \prec c \prec d $.

\begin{definition} [Recency-utility tree, RU-tree]
	\rm A recency-utility tree (RU-tree) is a variation of set-enumeration tree \cite{rymon1992search}, it incorporates both recency and utility factors and the total order $\prec$ is applied on items.
\end{definition}

\begin{definition}[Extension nodes in the RU-tree]
	\rm  The extensions of a pattern $X$ ($X$ is presented as a node in the RU-tree)  are obtained by appending one ore more item $y$ to $X$ such that $y$ succeeds all items  in $X$ according to the total order $\prec$. Thus, the extensions of $X$ are its descendant nodes.
\end{definition}

Consider the RU-tree of the running example, the  extension (descendant) nodes  of the node $(ea) $ are $ (eac) $, $ (ead)$,  and $ (eacd) $, while the supersets of the node  (\textit{ea}) are (\textit{eba}), (\textit{eac}), (\textit{ead}), (\textit{ebac}), (\textit{ebad}), (\textit{eacd}), and (\textit{ebacd}). Hence, the set of extension nodes of a node is a subset of its supersets.

Note that the RU-tree is just a conceptual representation of the search space of our addressed problem. Indeed, it does not need to be a physical tree in memory. Each node in a RU-tree is a set in the power set of $I$, where the root node is the empty set. Each node in a RU-tree stores the following information: a \textit{\underline{pattern}}, its \textit{\underline{recency-utility-list}} (\textit{\underline{RU-list}}), and its \textit{\underline{links}}. The \textit{\underline{RU-list}} of a pattern is a vertical data structure, which stores information about the transactions where the pattern appears. The RU-list structure will be presented in the next subsection. Finally, \textit{\underline{links}} contains pointers to the child nodes that have the same pattern as prefix. Nodes representing patterns containing a single item are constructed by scanning the database. Other nodes representing larger patterns are constructed by extracting information from already constructed nodes,  without scanning the database, as it will be explained. The RUP algorithm constructs a RUP-tree  as it explores the search space of patterns in the database. When the algorithm has finished building the RU-tree, all RHUPs in the database have been output.

\begin{lemma}
	\label{lemma1}
	The complete search space of the proposed RUP algorithm for the task of RHUPM can be represented by a RU-tree where items are sorted by ascending order of TWU.
\end{lemma}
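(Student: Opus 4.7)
The plan is to prove Lemma \ref{lemma1} in two stages: first, characterize what the ``complete search space'' for RHUPM must contain; second, show that the RU-tree, built from a set-enumeration tree under the total order $\prec$ on items, enumerates exactly this collection of candidates without omission or duplication.

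First I would argue that the search space of RHUPM is precisely the power set $2^{I}\setminus\{\emptyset\}$. By Definition~\ref{def_10}, a pattern $X$ is a RHUP iff $r(X)\geq minRe$ and $u(X)\geq minUtil\times TU$, and neither condition is \emph{a priori} known to fail for an arbitrary non-empty $X\subseteq I$ before its recency and utility are actually computed. In particular, since RHUPM is not (globally) anti-monotonic in either $r$ or $u$ alone, correctness requires that any algorithm claiming to return the complete set of RHUPs be capable, in principle, of evaluating every non-empty itemset over $I$. Hence the complete search space is $2^{I}\setminus\{\emptyset\}$.

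Next I would invoke the standard property of Rymon's set-enumeration tree \cite{rymon1992search}: given any total order on $I$, the tree whose root is $\emptyset$ and whose children of a node $X$ are exactly the extensions $X\cup\{y\}$ with $y\succ \max_{\prec}(X)$ enumerates every member of $2^{I}$ exactly once as a node. Applied to the order $\prec$ induced by ascending TWU, this yields the RU-tree as formalized in the definition of extension nodes above. I would briefly justify the two halves separately: (i) \emph{completeness}, by induction on $|X|$, every $X\subseteq I$ appears as some node, since listing its elements in $\prec$-order yields a unique root-to-node path using only the extension rule; (ii) \emph{non-redundancy}, since the extension rule forbids appending an item that is not $\prec$-greater than all items currently in $X$, each subset is reached by exactly one such path. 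Combining (i) and (ii), every pattern in $2^{I}\setminus\{\emptyset\}$ corresponds to exactly one node of the RU-tree.

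Putting the two stages together, the RU-tree covers every candidate that RHUPM must potentially examine, so it represents the complete search space. I would close by remarking that the specific choice of ordering items by ascending TWU does not affect this completeness claim, which holds for any total order on $I$; the TWU ordering is adopted for pruning efficiency, allowing earlier termination of branches whose TWU upper bound violates the utility constraint, but it does not alter the set of nodes that are in principle reachable. The main subtlety to handle carefully is the non-redundancy argument: since extensions are not arbitrary supersets but only those obtained by appending items that succeed every item already in $X$ under $\prec$, one must be explicit that this constraint neither drops nor duplicates any subset, which is where the total order (as opposed to a partial order) is essential.
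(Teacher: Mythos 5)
Your proposal is correct and follows essentially the same route as the paper: both arguments reduce the claim to the defining property of Rymon's set-enumeration tree, namely that under a total order on $I$ it enumerates every non-empty subset of $I$ exactly once as a node, so the RU-tree (with $\prec$ taken as ascending TWU) represents the full candidate space of $2^m-1$ patterns. Your version merely makes explicit the completeness-by-induction and non-redundancy steps that the paper leaves implicit in its appeal to the set-enumeration tree definition, and correctly notes that the particular TWU ordering matters only for pruning, not for completeness.
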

\begin{proof}
	According to the definition of a Set-enumeration tree \cite{rymon1992search}, all $2^m - 1$ non-empty patterns that can be formed using the items in $I$ are represented by nodes in the tree, where $m$ is the number of items in $I$. Thus, this structure can be used to systematically enumerate all subsets of $I$. For example, the complete conceptual RU-tree contains all subsets of $ I$ = $\{a, b, c, d, e\} $, and all the patterns that can be obtained by combining these items are represented by nodes in the tree. Thus, all the supersets of the root node (empty set) can be enumerated according to the TWU-ascending order of items by exploring the tree using a depth-first search. This representation is complete and unambiguous; the developed RU-tree can be used to represent the whole search space explored by the proposed algorithm.
\end{proof}

\begin{lemma}
	\label{lemma2}
	The recency of a node in the RU-tree is no less than the recency of any of its child nodes (extensions).
\end{lemma}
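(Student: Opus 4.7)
The plan is to reduce the statement to the anti-monotonicity of the set of supporting transactions and then invoke positivity of the per-transaction recency weights.

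First I would unfold the definitions. A node in the RU-tree corresponds to a pattern $X$, and by Definition~\ref{def_1} its recency is $r(X) = \sum_{T_q \in D,\, X \subseteq T_q} r(T_q)$, where each weight $r(T_q) = (1-\delta)^{T_{\text{current}} - T_q}$ is strictly positive since $\delta \in (0,1]$. An extension (child) of $X$ in the RU-tree is a pattern $X' = X \cup \{y\}$, where $y$ is an item that succeeds every item of $X$ in the total order $\prec$; in particular, $X \subsetneq X'$.

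Next I would observe the key set inclusion: since $X \subseteq X'$, every transaction $T_q$ that contains $X'$ also contains $X$, so $\{T_q \in D : X' \subseteq T_q\} \subseteq \{T_q \in D : X \subseteq T_q\}$. Combining this with the positivity of each $r(T_q)$, the sum defining $r(X)$ has at least as many non-negative summands as the sum defining $r(X')$, giving
\begin{equation*}
r(X) \;=\; \sum_{T_q : X \subseteq T_q} r(T_q) \;\geq\; \sum_{T_q : X' \subseteq T_q} r(T_q) \;=\; r(X').
\end{equation*}
Finally, iterating this one-item extension argument along any path from $X$ to an arbitrary descendant yields the full claim for all extension nodes, not just immediate children.

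There is essentially no hard step here: the argument is the usual anti-monotonicity pattern used for support in frequent pattern mining, transplanted to the recency measure. The only thing to be careful about is making the positivity of $r(T_q)$ explicit (so the inequality is not merely an equality on an empty tail), and being precise that ``extension'' in the RU-tree means a strict superset in the subset lattice, which is guaranteed by the rule that the appended items succeed those of $X$ under $\prec$. This lemma will be the foundation for the global downward closure property used later to prune the RU-tree whenever $r(X) < \textit{minRe}$.
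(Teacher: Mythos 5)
Your proof is correct and follows essentially the same route as the paper's: both reduce the claim to the inclusion $\textit{TIDs}(X') \subseteq \textit{TIDs}(X)$ for a superset $X'$ of $X$ and then compare the two sums of per-transaction recency weights. Your version is slightly more careful in making the positivity of each $r(T_q)$ explicit and in iterating the one-item step to cover all descendants, but these are refinements of the same argument, not a different one.
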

\begin{proof}
	Assume that there exists a node $ X^{k-1} $ in the RU-tree, containing $(k-1)$ items. Let $ X^{k} $ be a child node of $ X^{k-1} $. Hence, the pattern $ X^{k} $ contains $k$ items, and has $(k-1)$ items in common with $X^{k-1}$. Since \textit{X}$ ^{\textit{k}-1} $ $\subseteq$ \textit{X$ ^{k} $}, and \textit{TIDs}(\textit{X}$ ^\textit{k}) $ $\subseteq$ \textit{TIDs}(\textit{X} $ ^{{k}-1} )$, it can be proven that: $r(X^{k})$ = $\sum _{X^{k}\subseteq T_{q}\wedge T_{q}\in D}r(X^{k},T_{q})$ 	$\leq \sum _{X^{k-1}\subseteq T_{q} \wedge T_{q}\in D}r(X^{k-1},T_{q})$. Thus,  $r(X^{k}) \leq r(X^{k-1})$,	the recency of a node in the proposed RU-tree is always no less than the recency of any of its extension nodes.
\end{proof}

The designed RU-tree is a compact prefix-tree based representation of the database. It contains all the information required for mining high-utility patterns without scanning the original database. The size of the RU-tree is bounded by, but usually much smaller than, the total number of item occurrences in the database. In general, tree-based mining algorithms such as RUP proceed as follows: they first construct the nodes representing single items in the global tree, then the rest of the tree w.r.t. the search space is explored, then the pruning and mining operations are performed to prune branches of the tree and output the desired set of patterns. In the developed RU-tree structure, only the promising patterns and their RU-lists are needed explored. This structure provides the interesting property that it can be explored/built on-the-fly during the mining process.

Exploring all subsets is time-consuming for large databases. Thus, it is desirable to use pruning strategies to avoid exploring all patterns. In FPM, pruning is carried out using the well-known downward closure property of the support. Since this property  does not hold in HUPM, the traditional \textit{TWDC} property of the \textit{TWU} model was proposed as an alternative to prune the search space \cite{liu2005two,gan2018survey}. Nevertheless, if an algorithm for RHUPM only uses this property to prune the search space, it may be inefficient. The following lemmas present novel ways of pruning unpromising patterns, specific to RHUPM. These lemmas are based on a novel data structure called recency-utility list (RU-list), which compactly stores information about a pattern. Using this structure, it can be ensured that all RHUPs can be derived while pruning numerous unpromising items. The RU-list structure is defined as follows.

\subsection{The RU-list Structure}

The recency-utility list (RU-list) structure is a new vertical data structure, which incorporates the concepts of recency and utility. It is used to store all the necessary information for mining RHUPs. Let there be a pattern \textit{X} and a transaction (or pattern) \textit{T} such that $X\subseteq T$. The set of all items in \textit{T} that are not in \textit{X} is denoted as \textit{T}$\setminus$\textit{X}, and the set of all  items appearing after \textit{X} in \textit{T} (according to the $\prec$ order) is denoted as $ T/X $. For example, consider that \textit{X} = \{\textit{bd}\} and $T$ is the transaction $ T_{5} $ depicted in Table \ref{db:example}.  $ T_{5} $$\setminus$\textit{X} = \{\textit{ae}\}, and $ T_{5}/X $ = \{\textit{e}\}. Thus, $ T/X\subseteq$ \textit{T}$\setminus$\textit{X}. 

\begin{definition}[Recency-Utility list, RU-list]
	\label{def_14}
	\rm The RU-list of a pattern \textit{X} in a database is denoted as \textit{X.RUL}. Inspired by the utility-list \cite{liu2012mining}, RU-list contains an entry (element) for each transaction $ T_{q} $ where \textit{X} appears ($ X \subseteq T_{q} \wedge T_{q} \in D $). An element corresponding to a transaction $T_q$ has four fields: (1) the \textit{\textbf{\underline{tid}}} of \textit{T$ _{q} $}; (2) the recency of \textit{X} in \textit{T$ _{q} $} (\textit{\textbf{\underline{rec}}}); (3) the utility of \textit{X} in \textit{T$ _{q} $} (\textbf{\textit{\underline{iu}}}); and (4) the remaining utility of \textit{X} in \textit{T$ _{q} $} (\textbf{\textit{\underline{ru}}}), in which \textit{ru} is defined as $X.ru(T_{q})=\sum_{i_{j}\in(T_{q}/X)}u(i_{j},T_{q})$.
\end{definition}

For example, the RU-list of item $d$ is shown in Fig. \ref{fig:ru-list} (right). All the important information for discovering RHUPs can be compressed into the designed RU-list structure by performing only one database scan. The proposed RUP algorithm initially builds the RU-lists of single items by scanning the database. Then, the RU-list of any larger $k$-pattern ($k > 1$) is obtained by joining the RU-list of its parent node with the RU-list of an uncle node (which are ($k$-1)-patterns). This process can be applied recursively to obtain the recency and utility information of any pattern without rescanning the database. The detailed procedure for constructing the RU-list of a \textit{k}-pattern is given in Algorithm 3. RU-lists are constructed according to the ascending order of TWU. For the running example, recall that this order is $ e \prec b \prec a \prec c \prec d  $. Hence, the constructed RU-lists for single items are as shown in Fig. \ref{fig:ru-list}.

\begin{figure*}[!hbtp]
	\centering
	\includegraphics[trim = 0 0 0 0,clip,scale=0.45]{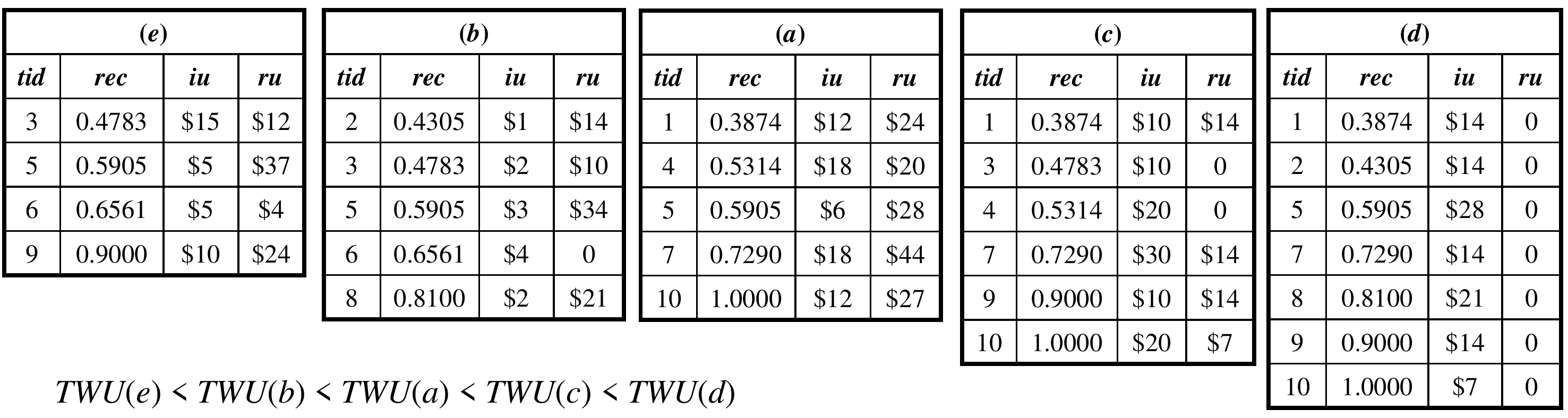}
	\caption{The constructed RU-lists of 1-patterns.}
	\label{fig:ru-list}
\end{figure*}

Using the developed RU-list structure provides several benefits: (1) the information required for mining RHUPs is stored in a lossless structure; (2) the structure is compact and thus does not requires a large amount of memory; and (3) the  important information about  any ($k$-1)-pattern can be quickly obtained by joining the RU-lists of some of its prefix ($k$-1)-patterns. Furthermore, key information can be derived from the RU-list of a pattern, as explained thereafter.

\begin{definition}
	\label{def_16}
	\rm Given a database $D$, the RU-list of a pattern \textit{X} allows deriving the recency of $X$  ($r(X)$), denoted as $ X.RE$, and calculated as:  $X.RE$ = $ \sum_{X\subseteq T_{q} \wedge T_{q}\in D}(X.rec)$.  
\end{definition}

\begin{definition}
	\rm Let the sum of the utilities of a pattern \textit{X} in \textit{D} be denoted as $ X.IU $. Using the RU-list of $X$, $X.IU$ is calculated as follows: $X.IU$ = $ \sum_{X\subseteq T_{q}\wedge T_{q}\in D}(X.iu)$.
\end{definition}

\begin{definition}
	\rm Let the sum of the remaining utilities of a pattern \textit{X} in \textit{D} be denoted as $ X.RU $. Using the RU-list of $X$, $X.RU$ is calculated as: $X.RU$ = $ \sum_{X\subseteq T_{q}\wedge T_{q}\in D}(X.ru)$.
\end{definition}

For example, consider pattern $(b)$ in Table \ref{db:example},  $(b)$ appears in transactions having the \textit{TIDs} $\{2, 3, 5, 6, 8\}$. The value $b.RE$ is calculated as (0.4305 + 0.4783 + 0.5905 + 0.6561 + 0.8100) = 2.9654, $b.IU$ is calculated as (\$1 + \$2 + \$3 + \$4 + \$2) = \$12, and $b.RU$ is calculated as (\$14 + \$10 + \$34 + \$0 + \$21) = \$79. Consider the pattern $(bd)$. It appears in transactions with \textit{TIDs}  $\{2, 5, 8\}$. The value $ (bd).RE$ is calculated as (0.4305 + 0.5905 + 0.8100) = 1.831, $ (bd).IU $ = (\$1 + \$14) + (\$3 + \$28) + (\$2 + \$21) = \$69, and $ (bd).RU $ = \$0.

\subsection{The Global and Conditional Downward Closure Properties}

Based on the RU-list structure and some properties of the recency and utility measures, five novel search space pruning strategies are designed. Those strategies allow the RUP algorithm to prune unpromising patterns early to reduce the search space, and thus improve its efficiency.

\begin{lemma}
	\label{lemma3}
	The actual $utility$ of a node/pattern in the RU-tree is 1) less than, 2) equal to, or 3) greater than the utility of any of its extension (descendant) nodes.
\end{lemma}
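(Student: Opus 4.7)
The plan is to establish this lemma by exhibiting that none of the three possible relations between $u(X)$ and $u(Y)$ (for $Y$ an extension of $X$) can be ruled out in general, which amounts to a counterexample-style argument. Since the goal is to demonstrate that utility, unlike recency (Lemma~\ref{lemma2}), admits no monotonicity, a case analysis together with concrete witnesses from the running example should suffice.

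First I would decompose the difference $u(Y) - u(X)$ for any one-item extension $Y = X \cup \{y\}$ with $y$ succeeding all items of $X$ in the $\prec$ order. Writing $\mathcal{T}_X = \{T_q \in D : X \subseteq T_q\}$ and $\mathcal{T}_Y = \{T_q \in D : Y \subseteq T_q\} \subseteq \mathcal{T}_X$, Definition~\ref{def_3} yields
\begin{equation*}
u(Y) - u(X) = \sum_{T_q \in \mathcal{T}_Y} u(y, T_q) \;-\; \sum_{T_q \in \mathcal{T}_X \setminus \mathcal{T}_Y} u(X, T_q).
\end{equation*}
The first sum is a \emph{gain} contributed by the new item $y$ in transactions that support $Y$, while the second sum is a \emph{loss} incurred because $Y$ no longer appears in those transactions supporting $X$ but not $y$. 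Both summands are nonnegative, and the sign of their difference is unconstrained in general, which is the core observation driving the lemma.

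Next, to make this concrete, I would instantiate each of the three relations with patterns drawn from Table~\ref{table:RHUPs}. For the \emph{greater-than} case, take $X = (a)$ and its extension $(ac)$: $u(a) = \$66 < u(ac) = \$140$, since $c$ contributes additional utility in enough transactions to offset the transactions of $(a)$ that are dropped. For the \emph{less-than} case, take $X = (ad)$ and its extension $(abd)$: $u(ad) = \$111 > u(abd) = \$37$, since $b$ appears with $(ad)$ in very few transactions, so the loss term dominates. For the \emph{equal-to} case, the running database does not furnish an exact equality, so I would argue abstractly that the gain-loss decomposition above can be balanced: by choosing profit values and quantities so that $\sum_{T_q \in \mathcal{T}_Y} u(y, T_q) = \sum_{T_q \in \mathcal{T}_X \setminus \mathcal{T}_Y} u(X, T_q)$, equality is attained (a minimal construction uses two transactions, one containing $X$ alone and one containing $X \cup \{y\}$, with the quantities tuned so the added $y$-utility exactly cancels the lost $X$-utility).

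The main obstacle is the third case: unlike the other two, it has no ready witness in the running example, so the argument needs either a contrived instance or the decomposition-based existence argument. I would favor the existence argument because it simultaneously explains \emph{why} all three relations are possible, and ties back cleanly to the purpose of the lemma, namely to motivate that naive downward-closure pruning on $u(\cdot)$ is unsound and that the refined bounds to be developed in the Global and Conditional Downward Closure properties are necessary.
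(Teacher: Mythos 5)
The paper itself supplies no proof of Lemma~\ref{lemma3}; it is asserted and immediately used to motivate why support-style downward closure fails for utility. Your strategy --- decompose $u(Y)-u(X)$ into a gain term $\sum_{T_q\in\mathcal{T}_Y}u(y,T_q)$ and a loss term $\sum_{T_q\in\mathcal{T}_X\setminus\mathcal{T}_Y}u(X,T_q)$ and then exhibit witnesses showing the sign of the difference is unconstrained --- is the right way to substantiate the intended reading of the lemma (that all three relations are realizable, hence no monotonicity), and it gives more than the paper does.

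There is, however, one concrete flaw: your ``less-than'' witness $(ad)\to(abd)$ is not an extension pair in the RU-tree. Under the paper's definition, an extension of $X$ appends only items that \emph{succeed every item of} $X$ in the total order $\prec$, which here is the TWU-ascending order $e\prec b\prec a\prec c\prec d$. Since $b\prec a$, the pattern $(abd)$ is a superset of $(ad)$ but a descendant of $(ba)$, not of $(ad)$; indeed $(ad)$ is a leaf because no item succeeds $d$. This matters because you explicitly set up the decomposition for $y$ succeeding all items of $X$, and the lemma is stated for extension (descendant) nodes, not arbitrary supersets. The fix is immediate from Table~\ref{table:RHUPs}: take $X=(ac)$ and its extension $(acd)$ (valid since $d$ succeeds $c$), giving $u(ac)=\$140 > u(acd)=\$137$. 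With that substitution, your greater-than witness $(a)\to(ac)$ ($\$66<\$140$) is valid, and your abstract balancing construction for the equality case is acceptable since the running database happens not to contain an exact tie. The decomposition itself is correct and is a nice touch the paper lacks.
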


Given the above lemma, the downward closure property of ARM cannot be used in HUPM to mine HUPs. In traditional HUPM, the \textit{TWDC} property \cite{liu2005two} was proposed to reduce the search space. However, this property does not consider the recency measure. To address this issue, the following lemmas and theorems are proposed based on the RU-list structure and properties of the recency and utility measures. These lemmas and theorems are designed to be used with the designed RU-tree structure to prune the search space.

\begin{definition}
	\label{def_11}
	\rm A pattern \textit{X} in a database  $ D $ is said to be a recent high transaction-weighted utilization pattern (RHTWUP) if it satisfies two conditions: 1) $ r(X) \geq minRe $; 2) $ TWU(X) \geq minUtil \times TU $, where $ minUtil $ is the minimum utility threshold and $ minRe $ is the minimum recency threshold.
\end{definition}

\begin{theorem}[Global downward closure property, GDC property]
	\label{theorem1}
	\textit{Let  there be a $k$-pattern (node) $X^k $ in the RU-tree, and $ X^{k-1}$ be a  ($k$-1)-pattern (node)  such that its first $k$-1 items are the same as $ X^k $. The global downward closure (GDC) property guarantees that: $ TWU(X^k) \leq TWU(X^{k-1}) $ \cite{liu2005two} and $ r(X^k) \leq r(X^{k-1}) $.}
\end{theorem}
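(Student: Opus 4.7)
The plan is to reduce both inequalities to a single structural observation, namely that $X^{k-1} \subseteq X^k$, and then exploit the fact that both $TWU(\cdot)$ and $r(\cdot)$ are sums of non-negative quantities over all transactions containing the pattern. Since $X^{k-1}$ consists of the first $k-1$ items of $X^k$ (under the total order $\prec$), we have $X^{k-1} \subsetneq X^k$, and therefore any transaction that contains $X^k$ must also contain $X^{k-1}$. This yields the key containment $\mathit{TIDs}(X^k) \subseteq \mathit{TIDs}(X^{k-1})$, which drives both halves of the theorem.

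First I would prove the $TWU$ inequality. Starting from the definition $TWU(Y) = \sum_{Y \subseteq T_q \wedge T_q \in D} tu(T_q)$, I would split the sum for $X^{k-1}$ into a contribution from transactions in $\mathit{TIDs}(X^k)$ plus a contribution from transactions in $\mathit{TIDs}(X^{k-1}) \setminus \mathit{TIDs}(X^k)$. Since each transaction utility $tu(T_q) = \sum_{i_j \in T_q} q(i_j, T_q) \cdot pr(i_j) \geq 0$ (quantities and unit profits are non-negative), dropping the second group of terms can only decrease the sum, giving $TWU(X^k) \leq TWU(X^{k-1})$. This is essentially the classical transaction-weighted downward closure established by \cite{liu2005two}, so I would simply cite that work for this half.

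Next I would handle the recency inequality, which is the part genuinely tied to the new framework. By Definition \ref{def_1}, $r(X) = \sum_{X \subseteq T_q \wedge T_q \in D} r(T_q)$ with $r(T_q) = (1-\delta)^{(T_{current}-T_q)} > 0$ for $\delta \in (0,1]$. Using the same containment $\mathit{TIDs}(X^k) \subseteq \mathit{TIDs}(X^{k-1})$, the sum defining $r(X^k)$ ranges over a subset of the transactions contributing to $r(X^{k-1})$, and every summand is strictly positive. Hence $r(X^k) \leq r(X^{k-1})$. This is exactly the content of Lemma \ref{lemma2}, so the recency half follows directly from that lemma.

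There is no serious technical obstacle here; the proof is essentially a one-line monotonicity argument applied to two different non-negative weighting functions on transactions. The only point that requires a sentence of care is the transition from ``$X^{k-1}$ shares its first $k-1$ items with $X^k$'' to the set-theoretic inclusion $X^{k-1} \subseteq X^k$; once that is spelled out, the rest is immediate from the definitions of $TWU$ and $r$ together with the non-negativity of transaction utilities and recency weights. I would close the proof by noting that the conjunction of both inequalities is what justifies using the RHTWUP condition (Definition \ref{def_11}) as a sound global pruning filter in the RU-tree traversal.
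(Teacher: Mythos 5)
Your proposal is correct and follows essentially the same route as the paper's own proof: both arguments reduce each inequality to the containment $\mathit{TIDs}(X^k) \subseteq \mathit{TIDs}(X^{k-1})$ induced by $X^{k-1} \subseteq X^k$ and then use the fact that $TWU$ and $r$ are sums of non-negative transaction weights, with the recency half coinciding with Lemma~\ref{lemma2}. Your version is slightly more explicit than the paper's (which asserts the recency inequality with ``it can be found that'' rather than spelling out the subset-of-summands argument), but there is no substantive difference.
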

\begin{proof}
	Let \textit{X}$ ^{\textit{k}-1} $ be a (\textit{k}-1)-pattern, and \textit{X$ ^{k} $} be  a superset of  \textit{X}$ ^{\textit{k}-1}$, containing $k$ items.
 Since \textit{X}$ ^{\textit{k}-1} $ $\subseteq$ \textit{X$ ^{k} $},		$TWU(X^{k})$ = $\sum _{X^{k}\subseteq T_{q}\wedge T_{q}\in D}tu(T_{q})$
		$\leq \sum _{X^{k-1}\subseteq T_{q}\wedge T_{q}\in D}tu(T_{q})$. Thus, we have $ TWU(X^{k}) $ $\leq TWU(X^{k-1})$ \cite{liu2005two}.	Besides, it can be found that \textit{r}(\textit{X}$ ^{k-1} $) $\geq$ \textit{r}(\textit{X$ ^{k} $}). Therefore, if \textit{X$ ^{k} $} is a RHTWUP, any  subset \textit{X}$ ^{k-1} $ of \textit{X}$ ^{\textit{k}} $  is also a RHTWUP.
\end{proof}

\begin{theorem}[RHUPs $\subseteq$ RHTWUPs]
	\label{theorem2} 
	A RU-tree is a Set-enumeration tree where the total order $\prec$ is applied. Assuming that order, it follows that RHUP  $\subseteq$ RHTWUP, which means that if a pattern is not a RHTWUP,  none of its supersets is a RHUP.
\end{theorem}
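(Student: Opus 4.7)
The plan is to establish the containment RHUP $\subseteq$ RHTWUP directly from the definitions, and then promote it to the superset/pruning statement by combining the anti-monotonicity of both the recency and the TWU measures, both of which are already available from earlier results in the paper.

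First I would prove the containment. Suppose $X$ is a RHUP, so by Definition~10 we have $r(X) \geq minRe$ and $u(X) \geq minUtil \times TU$. For every transaction $T_q$ with $X \subseteq T_q$, the restricted utility $u(X, T_q) = \sum_{i_j \in X} u(i_j, T_q)$ is trivially bounded above by the full transaction utility $tu(T_q) = \sum_{i_j \in T_q} u(i_j, T_q)$, since the latter sums the same per-item utilities over a superset of indices. Summing this inequality over all $T_q \supseteq X$ yields $u(X) \leq TWU(X)$, and hence $TWU(X) \geq minUtil \times TU$. Combined with the identical recency condition, $X$ is also a RHTWUP, giving RHUP $\subseteq$ RHTWUP.

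Next I would establish the superset claim by contrapositive. Let $X$ be a pattern that is \emph{not} a RHTWUP, and let $Y$ be any superset of $X$ reachable as a descendant of $X$ in the RU-tree. Because $X$ fails the RHTWUP criterion, at least one of the following holds: (i) $r(X) < minRe$, or (ii) $TWU(X) < minUtil \times TU$. In case (i), iterating Lemma~2 along the path from $X$ to $Y$ in the RU-tree gives $r(Y) \leq r(X) < minRe$, so $Y$ violates the recency condition of RHUP. In case (ii), Theorem~1 (the GDC property) yields $TWU(Y) \leq TWU(X)$, and combining with the bound $u(Y) \leq TWU(Y)$ just proved, we obtain $u(Y) \leq TWU(Y) \leq TWU(X) < minUtil \times TU$, so $Y$ violates the utility condition of RHUP. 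In either case $Y$ cannot be a RHUP, so no superset of a non-RHTWUP pattern is a RHUP.

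The main obstacle is mild: it is simply arranging the two failure modes into a unified argument, since the recency and the TWU must be handled separately but behave symmetrically with respect to taking supersets along $\prec$. All the heavy lifting---anti-monotonicity of $r$ in Lemma~2, anti-monotonicity of $TWU$ in Theorem~1, and the one-line bound $u(X) \leq TWU(X)$---is already at hand, so no new machinery is required, and the lemma immediately justifies using RHTWUPs as a safe pruning surrogate for RHUPs during the RU-tree traversal.
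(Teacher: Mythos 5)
Your proof is correct and follows essentially the same route as the paper's: both rest on the bound $u(X) \leq TWU(X)$ together with the anti-monotonicity of recency and of $TWU$ under taking supersets (Lemma~2 and Theorem~1). Your write-up is simply more explicit---separating the containment claim from the contrapositive pruning claim and treating the two failure modes individually---where the paper compresses the same ingredients into three lines; the only cosmetic difference is that you route the anti-monotonicity through the RU-tree path, whereas it holds for arbitrary supersets directly since any superset appears in a subset of the containing transactions.
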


\begin{proof}
	Let $X^{k}$ be an $k$-pattern such that $ X^{k-1} $ is a subset of $ X^{k} $. 
	Based on the property of \textit{TWU} concept \cite{liu2005two}, we know that  $ u(X)\leq TWU(X) $. Moreover, by Theorem \ref{theorem2}, $ r(X^{k}) \leq r(X^{k-1})$ and $ TWU(X^{k}) \leq TWU(X^{k-1})$. Thus, if $ X^{k} $ is not a RHTWUP, none of its supersets is a RHUP. 
\end{proof}

\begin{lemma}
	\label{lemma4}
	The TWU of any node in the RU-tree is greater than or equal to the utility of any  of its descendant nodes, and more generally greater than the utility of  any of its supersets (which may not be its descendant nodes).
\end{lemma}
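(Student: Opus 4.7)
The plan is to prove the inequality directly from the definitions of $TWU$ and $u$, using two simple facts: (i) if $Y$ is a superset of $X$ then every transaction containing $Y$ also contains $X$, so $\mathit{TIDs}(Y) \subseteq \mathit{TIDs}(X)$; and (ii) within any single transaction, the utility of a sub-pattern is bounded by the transaction utility because each item contributes non-negatively. Combining these two observations gives the desired bound, with the descendant-node statement appearing as a special case since a descendant of $X$ in the RU-tree is, by definition, a superset of $X$ obtained by appending items that follow $\prec$.

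Concretely, I would first unfold $TWU(X) = \sum_{X \subseteq T_q \wedge T_q \in D} tu(T_q)$ from its definition. Then, for any superset $Y \supseteq X$, I would write
\begin{equation*}
u(Y) \;=\; \sum_{Y \subseteq T_q \wedge T_q \in D} u(Y, T_q) \;=\; \sum_{Y \subseteq T_q \wedge T_q \in D} \sum_{i_j \in Y} u(i_j, T_q).
\end{equation*}
Because every $u(i_j, T_q) = q(i_j,T_q)\cdot pr(i_j) \ge 0$ and $Y \subseteq T_q$, the inner sum is at most $\sum_{i_j \in T_q} u(i_j,T_q) = tu(T_q)$. Summing this transaction-by-transaction inequality over the restricted index set $\{T_q : Y \subseteq T_q\}$, and then enlarging the index set to $\{T_q : X \subseteq T_q\}$ using fact (i) (again exploiting $tu(T_q) \ge 0$), we obtain
\begin{equation*}
u(Y) \;\le\; \sum_{Y \subseteq T_q \wedge T_q \in D} tu(T_q) \;\le\; \sum_{X \subseteq T_q \wedge T_q \in D} tu(T_q) \;=\; TWU(X).
\end{equation*}

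Since every descendant of $X$ in the RU-tree is obtained by appending $\prec$-later items to $X$ and is therefore a superset of $X$, the descendant case is an immediate corollary, which is effectively the standard $TWU$ overestimation argument from Liu \textit{et al.}~\cite{liu2005two} transported to the RU-tree. I do not anticipate a real obstacle here: the only subtlety is being explicit that the paper's utility values are non-negative (so that dropping items and enlarging the summation index set can only weakly decrease and weakly increase the respective sums), and that the phrase ``greater than'' in the statement must be read as ``greater than or equal to,'' matching the inequality actually established.
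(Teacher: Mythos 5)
Your proof is correct and follows essentially the same route as the paper, which simply invokes the standard TWU overestimation chain $u(Y) \le TWU(Y) \le TWU(X)$ for $Y \supseteq X$ by citing Liu \emph{et al.}\ and the anti-monotonicity of $TWU$. Your version is in fact more self-contained (you derive both inequalities from the definitions and correctly flag that ``greater than'' must be read as ``greater than or equal to''), but there is no substantive difference in approach.
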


\begin{proof}
	Let $X^{k-1}$ be a node in the RU-tree, and $X^{k}$ be a children (extension) of $X^{k-1}$. According to the \textit{TWU} concept  \cite{liu2005two}, the relationship $TWU(X^{k-1}) \geq TWU(X^k)$ holds, and thus this lemma holds. 
\end{proof}

\begin{theorem}
	\label{theorem3} 
	In the RU-tree, if the TWU of a tree node $X$ is less than  $ minUtil \times TU $, $X$ is not a RHUP, and all its supersets (not only its descendant nodes) are not RHUPs. 
\end{theorem}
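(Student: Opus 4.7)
The plan is to reduce the claim to two facts that are already in hand: the basic TWU bound $u(X) \le TWU(X)$ from the cited \textit{TWU} model \cite{liu2005two}, and the monotonicity of TWU with respect to set containment, which is restated as part of Theorem \ref{theorem1} and strengthened in Lemma \ref{lemma4}. Given these, the theorem becomes essentially a one-line chain of inequalities, but I will split the argument into the case of $X$ itself and the case of an arbitrary superset to make the ``not only descendant nodes'' clause transparent.

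First I would handle $X$ directly. By the \textit{TWU} property, $u(X) \le TWU(X)$. Combined with the hypothesis $TWU(X) < minUtil \times TU$, this gives $u(X) < minUtil \times TU$, so $X$ fails the utility condition in Definition \ref{def_10} and therefore cannot be a RHUP, regardless of whether its recency exceeds $minRe$.

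Next I would take an arbitrary superset $Y \supseteq X$ (so $Y$ is not required to be a descendant of $X$ in the RU-tree; it may live in an entirely different branch). Lemma \ref{lemma4} — or equivalently the first half of Theorem \ref{theorem1} applied inductively on $|Y| - |X|$ — gives $TWU(Y) \le TWU(X)$. Combining with the $u \le TWU$ bound once more, I get
\begin{equation*}
u(Y) \;\le\; TWU(Y) \;\le\; TWU(X) \;<\; minUtil \times TU,
\end{equation*}
so $Y$ also violates the utility requirement of Definition \ref{def_10} and is not a RHUP.

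I do not expect a real obstacle here; the only subtle point worth emphasising in the write-up is why the conclusion extends to \emph{all} supersets rather than just tree-descendants. The reason is that the monotonicity of TWU is a purely combinatorial property of set containment and does not depend on the total order $\prec$ used to lay out the RU-tree: for any $Y \supseteq X$ we have $TIDs(Y) \subseteq TIDs(X)$ and hence $TWU(Y) = \sum_{Y \subseteq T_q} tu(T_q) \le \sum_{X \subseteq T_q} tu(T_q) = TWU(X)$. Thus the pruning induced by this theorem is global, not merely local to the subtree rooted at $X$, which is precisely what makes it a useful search-space reduction strategy for RUP.
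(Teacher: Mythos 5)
Your proof is correct and follows essentially the same route as the paper, which simply defers to Theorem \ref{theorem2}: both arguments come down to the chain $u(Y) \le TWU(Y) \le TWU(X) < minUtil \times TU$ for $X$ itself and for any superset $Y \supseteq X$. One minor note: Lemma \ref{lemma4} as stated compares the TWU of a node to the \emph{utility} of its supersets rather than to their TWU, so it is not quite the right citation for $TWU(Y) \le TWU(X)$, but your direct argument via $TIDs(Y) \subseteq TIDs(X)$ supplies that inequality independently, so nothing is missing.
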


\begin{proof}
	This theorem directly follows from Theorem \ref{theorem2}. 
\end{proof}	

To ensure that the RUP algorithm discovers all RHUPs, it utilizes the utility and remaining utility of a pattern to calculate an overestimation of the utility of any of its descendant nodes in the RU-tree. This upper bound by defined in the following lemma.

\begin{lemma}
	\label{lemma5}
	Let there be a pattern $X^k$ appearing in a database $D$. Furthermore, let $X^{k-1}$ be a child node of $X^k$ in the RU-tree. The sum of $X^{k-1}.IU + X^{k-1}.RU$ (calculated using the  RU-list of $X^{k-1}$) is an upper bound on the utility of $X^k$, i.e., $ X^{k}.IU \leq X^{k-1}.IU + X^{k-1}.RU$.
\end{lemma}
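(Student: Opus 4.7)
The plan is to unfold the definitions of $X^k.IU$, $X^{k-1}.IU$, and $X^{k-1}.RU$ from Definitions 16--18, and then to exploit the fact that in the RU-tree extensions are built along the total order $\prec$. First, I would note that the statement as written should be read with $X^k$ as an extension (descendant) of $X^{k-1}$ in the RU-tree, not the other way around; otherwise the displayed inequality would not make dimensional sense. So $X^{k-1} \subset X^k$ and, by construction of the RU-tree, the single extra item $X^k \setminus X^{k-1}$ succeeds every item of $X^{k-1}$ according to $\prec$.

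Next I would work transaction by transaction. For every $T_q$ with $X^k \subseteq T_q$, clearly $X^{k-1} \subseteq T_q$, so $TIDs(X^k) \subseteq TIDs(X^{k-1})$. Splitting the utility of $X^k$ in $T_q$ into its $X^{k-1}$-part and its ``new item'' part gives
\begin{equation*}
u(X^k,T_q) \;=\; u(X^{k-1},T_q) \;+\; \sum_{i_j \in X^k \setminus X^{k-1}} u(i_j,T_q).
\end{equation*}
Because the items of $X^k \setminus X^{k-1}$ all come after $X^{k-1}$ in the $\prec$-order, they belong to $T_q / X^{k-1}$, and therefore
\begin{equation*}
\sum_{i_j \in X^k \setminus X^{k-1}} u(i_j,T_q) \;\leq\; \sum_{i_j \in T_q / X^{k-1}} u(i_j,T_q) \;=\; X^{k-1}.ru(T_q).
\end{equation*}

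Finally I would sum the per-transaction bound over all $T_q$ with $X^k \subseteq T_q$ and then enlarge the index set to all $T_q$ with $X^{k-1} \subseteq T_q$ (which is legitimate since both $u$ and $ru$ are non-negative). Applying Definitions 16--18 to identify the resulting sums with $X^{k-1}.IU$ and $X^{k-1}.RU$ yields the desired inequality $X^k.IU \leq X^{k-1}.IU + X^{k-1}.RU$.

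The one subtle point to justify carefully is the inclusion $X^k \setminus X^{k-1} \subseteq T_q / X^{k-1}$: this is exactly where the $\prec$-order governing the RU-tree enters, and without it one would only get $X^k \setminus X^{k-1} \subseteq T_q \setminus X^{k-1}$, which is too weak to match the definition of $ru$. Beyond this the argument is a bookkeeping exercise; the main obstacle is mainly clerical, namely keeping the ``parent/child'' roles of $X^{k-1}$ and $X^k$ straight and making clear that the enlargement of the index set in the last step is valid.
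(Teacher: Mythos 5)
Your proof is correct, and it is genuinely more substantive than the one in the paper: the paper disposes of this lemma in one line by citing the remaining-utility upper bound from the utility-list literature (HUI-Miner and its successors), whereas you derive it from first principles. Your per-transaction decomposition $u(X^k,T_q)=u(X^{k-1},T_q)+\sum_{i_j\in X^k\setminus X^{k-1}}u(i_j,T_q)$, the observation that the extension items lie in $T_q/X^{k-1}$ precisely because the RU-tree extends along the total order $\prec$, and the final enlargement of the summation index set from $TIDs(X^k)$ to $TIDs(X^{k-1})$ (valid by non-negativity of utilities) are exactly the steps the cited works rely on, and you are right that the $\prec$-order is the crux --- with only $X^k\setminus X^{k-1}\subseteq T_q\setminus X^{k-1}$ the bound would not match the definition of $ru$. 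You also correctly flag and repair the parent/child mislabeling in the lemma statement (as written, $X^{k-1}$ is called a child of $X^k$, though it must be the ancestor for the inequality to make sense); the paper's own proof silently assumes the corrected reading. What the paper's citation-style proof buys is brevity; what yours buys is a self-contained, checkable argument that makes the role of the item ordering explicit.
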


\begin{proof} 
    According to \cite{liu2012mining,lin2016efficient}, this lemma holds. The utility of $ X^{k} $ in \textit{D} is always less than or equal to the sum of actual utility and remaining utility of $ X^{k-1} $ (w.r.t. $X^{k-1}.IU$ + $X^{k-1}.RU$).
\end{proof}	

\begin{theorem}[Conditional downward closure property, CDC property]
	\label{theorem4} 
	For any node $X$ in the RU-tree, the sum of $X.IU$ and $X.RU$ in its RU-list is larger than or equal to the utility of any of its descendant nodes (extensions). Thus, this upper-bound provides a conditional anti-monotonicity property for pruning unpromising patterns in the RU-tree.
\end{theorem}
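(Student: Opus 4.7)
The plan is to derive the bound $u(Y) \le X.IU + X.RU$ for every descendant $Y$ of $X$ in the RU-tree, by combining Lemma~\ref{lemma5} with the structural fact that, in a RU-tree ordered by $\prec$, every descendant $Y$ of $X$ satisfies $X \subseteq Y$ and $Y \setminus X \subseteq T_q / X$ in each transaction $T_q$ containing $Y$, because extensions may only append items that succeed every item of $X$ in the $\prec$-order. This structural observation is exactly what promotes the single-step Lemma~\ref{lemma5} into a bound valid at arbitrary depth in the tree.

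First I would fix a descendant $Y$ of $X$ and a transaction $T_q$ with $Y \subseteq T_q$, and split $u(Y, T_q) = u(X, T_q) + \sum_{i_j \in Y \setminus X} u(i_j, T_q)$. The first summand equals the $iu$ field of $X$'s RU-list entry for $T_q$, and the second summand is bounded by the $ru$ field of that same entry, using $Y \setminus X \subseteq T_q / X$ together with the definition of remaining utility in Definition~\ref{def_14}. Summing this per-transaction inequality over $T_q \in \textit{TIDs}(Y)$, and using $\textit{TIDs}(Y) \subseteq \textit{TIDs}(X)$ (since $X \subseteq Y$), I obtain $u(Y) \le \sum_{X \subseteq T_q}(X.iu + X.ru) = X.IU + X.RU$, which is the desired upper bound. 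Equivalently, the same inequality follows by a straightforward induction on the depth of $Y$ below $X$, using Lemma~\ref{lemma5} as the one-step case and telescoping along the extension chain from $X$ to $Y$.

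The ``conditional anti-monotonicity'' conclusion is then immediate: if at some tree node $X$ one observes $X.IU + X.RU < minUtil \times TU$, then $u(Y) < minUtil \times TU$ for every descendant $Y$, so the entire subtree rooted at $X$ contains no RHUP and can be pruned. The main obstacle I anticipate is not algebraic but scoping: one must insist that $Y$ is a descendant in the RU-tree sense (items appended according to $\prec$), rather than an arbitrary superset of $X$. Only under this restriction is $Y \setminus X \subseteq T_q / X$ guaranteed, and hence only under this restriction does $X.ru$ actually bound the contribution of the extension items; this is precisely what the adjective ``conditional'' in the theorem name signals, and why the bound is local to the branch rooted at $X$, complementing the global, TWU-based bound of Theorem~\ref{theorem3}.
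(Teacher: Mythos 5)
Your proof is correct, and it is genuinely more self-contained than the paper's. The paper disposes of this theorem in one line by invoking the definition of a Set-enumeration tree together with Lemma~\ref{lemma5} (the one-step parent--child bound, itself justified only by citation to the utility-list literature), implicitly leaving the reader to telescope that bound down an extension chain. Your primary argument instead proves the bound for a descendant $Y$ at arbitrary depth in a single pass: the per-transaction decomposition $u(Y,T_q)=u(X,T_q)+\sum_{i_j\in Y\setminus X}u(i_j,T_q)$, the containment $Y\setminus X\subseteq T_q/X$ guaranteed by the $\prec$-ordered extension mechanism, and the summation over $\textit{TIDs}(Y)\subseteq \textit{TIDs}(X)$ using nonnegativity of item utilities. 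This buys you two things the paper's route does not make explicit. First, it isolates exactly where the tree structure is needed --- your scoping remark that the bound holds for descendants but not arbitrary supersets is the real content of the ``conditional'' qualifier, and the paper never states it. Second, it sidesteps a small gap in the telescoping route: Lemma~\ref{lemma5} as written only bounds $X^{k}.IU$ by $X^{k-1}.IU+X^{k-1}.RU$, whereas chaining across several levels actually requires the stronger fact that $IU+RU$ itself is non-increasing along extensions; your direct per-transaction argument never needs that intermediate monotonicity. The only implicit assumption you should flag is that all item utilities are nonnegative (true in this paper's model, where quantities and unit profits are positive), since both the bound $\sum_{i_j\in Y\setminus X}u(i_j,T_q)\le X.ru(T_q)$ and the enlargement of the transaction index set from $\textit{TIDs}(Y)$ to $\textit{TIDs}(X)$ rely on it.
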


\begin{proof}
	According to the definition of a Set-enumeration tree \cite{rymon1992search} and Lemma \ref{lemma5}, this theorem holds. Thus, the utility of a pattern appearing in a database \textit{D} is always less than or equal to the sum of the utilities and remaining utilities of any of its ancestor nodes.
\end{proof}

The above lemmas and theorems can be used to reduce the search space for mining RHUPs, while ensuring that no RHUPs will be missed. We incorporate the results of all these lemmas and theorems into a novel utility-driven mining algorithm, called RUP. The designed GDC and CDC properties guarantee the \textbf{completeness} and \textbf{correctness} of the proposed RUP algorithm. By utilizing the GDC property, the algorithm only needs to initially construct the RU-lists of promising patterns (the \textit{RHTWUPs} containing a single item, denoted as  \textit{RHTWUPs}$^1 $) to then explore the other patterns recursively. Furthermore, the following pruning strategies are proposed in the RUP algorithm to speed up the discovery of RHUPs.

\subsection{The Proposed Pruning Strategies}

As mentioned in subsection 4.1, the size of the  search space for the problem of RHUPM is $2^m$ - 1 patterns (where $ m $ is the number of items in $I$), by systematically enumerating all subsets of $I$. If no powerful pruning strategies are used, the search space for discovering the desired RHUPs is huge. Therefore, inspired by previous studies, several efficient pruning strategies are integrated in the designed RUP algorithm to prune unpromising patterns early. Those strategies are based on the above lemmas and theorems, and generally allows to explore a much smaller part of the search space. Details are given thereafter.

\begin{strategy}[\textit{TWU} pruning strategy]
  \rm	By scanning the database once, the recency and \textit{TWU} values of each item can be obtained. If the \textit{TWU} of an item $i$ (i.e., $TWU(i)$) and the sum of all the recencies of $i$ (i.e., $r(i)$) do not satisfy the two conditions of a RHTWUP, this item can be directly pruned, because none of its supersets is a RHUP.
\end{strategy}

\begin{example}
	By Theorem \ref{theorem2}, all subsets of a RHTWUP are also RHTWUPs. Since \{\textit{r}($e$): 2.3624, \textit{TWU}($e$): \$112; \textit{r}($b$): 2.9654, \textit{TWU}($b$): \$116; \textit{r}($a$): 2.9145, \textit{TWU}($a$): \$217; \textit{r}($c$): 3.6235, \textit{TWU}($c$): \$236; \textit{r}($d$): 4.3626, \textit{TWU}($d$): \$251\}, all five items satisfy the two necessary conditions to be RHTWUPs. The 1-pattern RHTWUPs are shown in Table \ref{table:1_RHTWUPs}. Any of their supersets can be a RHUP and thus the branches in the RU-tree corresponding to each of these patterns cannot be pruned.
\end{example}

\begin{table}[ht]
	\setlength{\abovecaptionskip}{0pt}
	\setlength{\belowcaptionskip}{0pt} 	
	\caption{Derived 1-pattern RHTWUPs}
	\label{table:1_RHTWUPs} 
	\centering
	\begin{tabular}{|c|c|c|c|c|c|}
		\hline
		\textbf{Pattern} & $\boldsymbol{ {r(X)} }$ & $\boldsymbol{{TWU(X)}}$  & $\boldsymbol{{u(X)}}$ &	$\boldsymbol{{r(X)}}$ & $\boldsymbol{{TWU(X)}}$  \\ \hline \hline	
		$ (a) $ &	2.9145 &	\$217  &       $ (d) $ &	4.3626 &	\$251     \\ \hline
		$ (b) $ &	2.9654 &	\$116	&   	 $ (e) $ &	2.3624 &	\$112   	\\ \hline
		$ (c) $ &	3.6235 &	\$236 &        $     $ &	 &	     \\ \hline	
	\end{tabular}
\end{table}

\begin{strategy}[\textit{recency} pruning strategy]
	\rm  When traversing the RU-tree using a depth-first search, if the sum of all the recency values of a node $X$ (i.e., $ X.RE$) in its constructed RU-list is less than the minimum recency threshold, then none of its child nodes is a RHUP. 
\end{strategy}

\begin{example}
	As shown in Fig. \ref{fig:RU_tree2}, the \textit{recency} of the pattern $(ea)$ is  no less than that of its extensions $(eac)$, $(ead)$ and $(eacd)$, which are respectively calculated as $r(ea)$ = 0.5905, $r(eac)$ = 0, $ r(ead) $ = 0.5905, and $ r(eacd) $ = 0. Thus, the \textit{recency}  of $(eac)$, $(ead)$ and $(eacd)$ are all less than or equal to the \textit{recency} of $(ea)$. The pattern $(ea)$ is neither a RHTWUP nor a RHUP since $ TWU(ea)$ = \$42$ >$ \$32.5, but $ r(ea)$ = 0.5905 $<$ 1.5. Thus, when the designed algorithm processes the supersets of $(ea)$, it can skip processing all patterns that are descendant (extension) nodes of $(ea)$. This means that the patterns $ \{(eac), (ead), (eacd)\} $ can be directly pruned (the sub-branches of $X$ do not need to be explored).  
\end{example}

\begin{figure}[hbtp]	
	\centering
	\includegraphics[trim= 0 0 5 0,clip,scale=0.25]{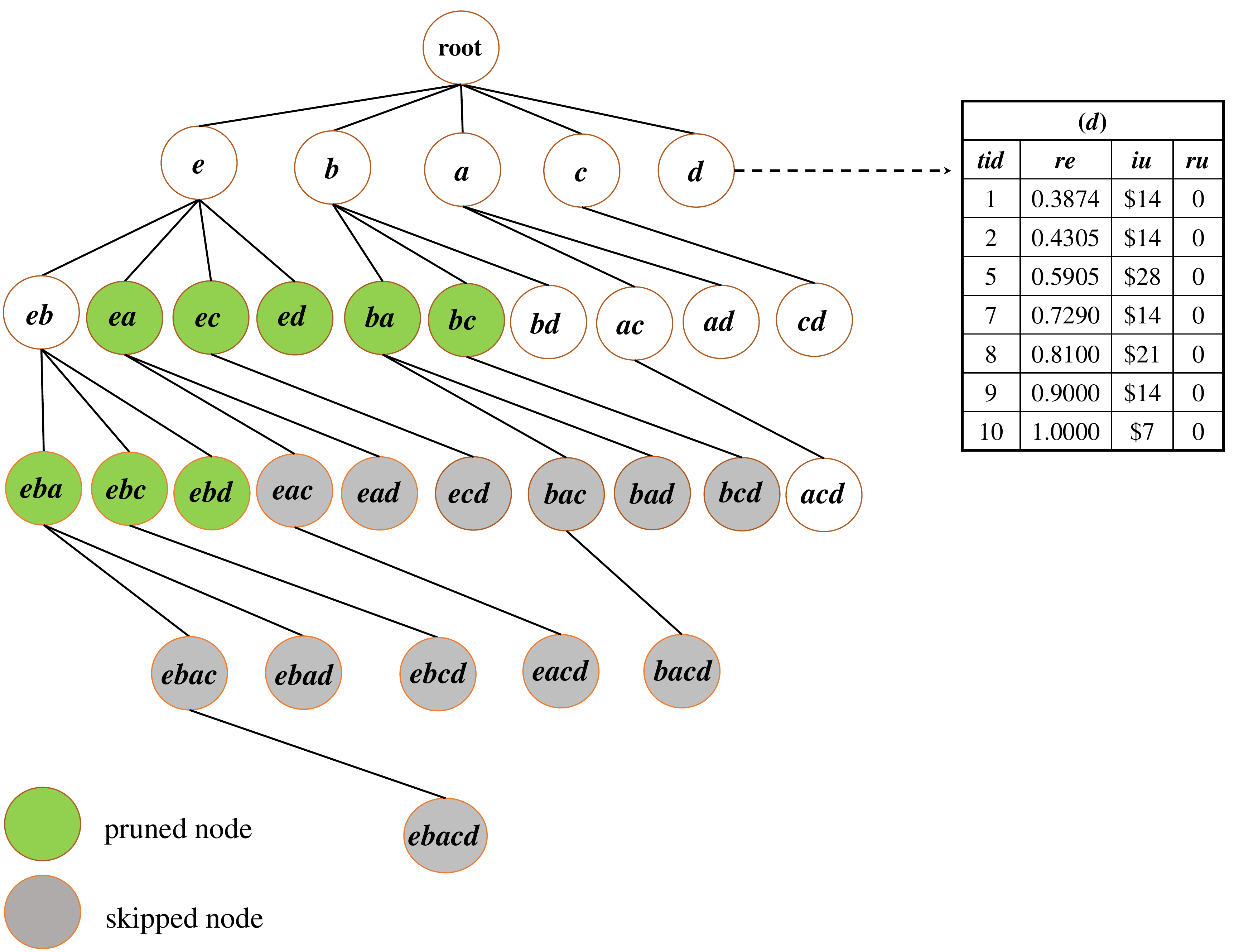}
	\caption{Application of the pruning strategies on the conceptual RU-tree.}
	\label{fig:RU_tree2}
\end{figure}

By Lemma \ref{theorem1} and Theorem \ref{lemma5},  upper bounds on a pattern's utility can be used to reduce the search space, as described in Strategy 3. Furthermore, an effective EUCP (Estimated Utility Co-occurrence Pruning) strategy \cite{fournier2014fhm} is also integrated in the RUP algorithm to speed up the discovery of RHUPs. It is defined as follows: if the \textit{TWU} of a 2-pattern is less than the $minUtil$ threshold, any superset of this pattern is neither a RHTWUP nor a HUP \cite{fournier2014fhm}. According to the definitions of RHTWUP and RUP, this can be applied in the proposed RUP algorithm to further filter unpromising patterns. To effectively apply the EUCP strategy, a structure named Estimated Utility Co-occurrence Structure (\textit{EUCS}) \cite{fournier2014fhm} is built by the proposed algorithm. It is a matrix that stores the \textit{TWU} values of  2-patterns. This structure is employed by strategy \ref{strategy:EUCS}.

\begin{table}[!ht]
	\setlength{\abovecaptionskip}{0pt}
	\setlength{\belowcaptionskip}{0pt} 	
	\centering
	\label{table:EUCS}
	\caption{The constructed $EUCS$}
	\begin{tabular}{|c|c|c|c|c|c|c|}
		\hline
		\textbf{Item} & \textbf{\textit{a}} & \textbf{\textit{b}} & \textbf{\textit{c}} & \textbf{\textit{d}} & \textbf{\textit{e}}  \\ \hline \hline
		\textbf{\textit{b}}    & \$42        & -          & -          & -          & -               \\ \hline
		\textbf{\textit{c}}    & \$175       & \$27        & -          & -          & -            \\ \hline
		\textbf{\textit{d}}    & \$179       & \$80        & \$171        & -          & -            \\ \hline
		\textbf{\textit{e}}    & \$42        & \$78        & \$61        & \$76        & -             \\ \hline
	\end{tabular}
\end{table}

\begin{strategy}[\textit{upper bound on utility} pruning strategy]
	\rm When traversing the RU-tree based on a depth-first search strategy, if the sum of $ X.IU $ and $ X.RU $ of any node $X$ is less than the minimum utility count, any of its child node is not a RHUP, and can thus be directly pruned (the sub-branches of $X$ do not need to be explored).
\end{strategy}

\begin{strategy}[\textit{EUCP} strategy]	
	\rm Let $X$ be a pattern (node) encountered during a depth-first search on a RU-tree. If the \textit{TWU} of a 2-pattern $Y$ $\subseteq$ $X$ according to the constructed \textit{EUCS} is less than the minimum utility threshold, $X$ is neither a RHTWUP nor a RHUP. As a consequence, none of its child nodes is a RHUP. The construction of the RU-lists of $X$ and its children thus does not need to be performed. 
	\label{strategy:EUCS}
\end{strategy}

\begin{example}
	Consider the RU-tree built for the running example, depicted in Fig. \ref{fig:RU_tree2}. The pattern $(bc)$ is not a RHTWUP since $ TWU(bc)$ = \$27 $<$ \$32.5. Thus, by applying the EUCP strategy, the supersets of $(bc)$, including $(bcd)$, $(ebc)$, $(bac)$, $(ebac)$, $(ebcd)$, $(bacd)$, and $(ebacd)$, can be skipped.
\end{example}
	
\begin{strategy}[\textit{RU-list} checking strategy]
  \rm	Let $X$ be a pattern (node) visited during a depth-first search on a RU-tree. After constructing the RU-list of $X$, if $ X.{RUL} $ is empty or the $ X.RE $ value is less than the minimum recency threshold, $X$ is not a RHUP, and none of its child nodes is a RHUP. As a consequence, the  RU-lists of the child nodes of $X$ do not need to be constructed.
\end{strategy}

\begin{example}
Consider the RU-tree depicted in Fig. \ref{fig:RU_tree2} and the pattern $ (eac)$. The RU-list of this pattern is empty since it does not appear in any transactions of the running example database. Thus, RU-lists of descendant (extension) nodes of $ (eac) $ do not need to be constructed and these patterns are skipped. As shown in Fig. \ref{fig:RU_tree2}, the search space of the running example contains ($2^m$ - 1) = $2^5$ - 1 = 31 patterns if all patterns are explored, while the number of nodes visited by the proposed algorithm is 19 (12 nodes are skipped). Among them, 8 nodes are visited and pruned without exploring their descendant nodes since they are not RHUPs according to the pruning strategies. The final set of RHUPs is shown in Table \ref{table:RHUPs}.
\end{example}

\subsection{The RUP Algorithm}

Based on the above properties and pruning strategies, the pseudo-code of the proposed RUP algorithm is presented in Algorithm 1. The RUP algorithm first sets $X.RUL$, $D.RUL$ and $EUCS$ to the empty set (Line 1). Then, RUP scans the database to calculate the $ TWU(i) $ and $ r(i) $ values of each item $ i\in I $ (Line 2) to then identify those that may be RHUPs (Line 3). After sorting $I^*$ according to the total order $\prec$ (the ascending order of \textit{TWU}, Line 4), the algorithm scans $D$ again to construct the RU-list of each 1-item $ i\in I^* $ and build the \textit{EUCS} (Line 5). The RU-lists of all extensions of  each item $ i\in I^* $ are then recursively processed using a depth-first search procedure named RHUP-Search (Line 6) and then the final set of RHUPs is returned (Line 7). It is important to notice that the RU-tree is a conceptual presentation of the search space of the RUP algorithm. RUP only need to construct the original RU-lists of 1-patterns and then generate the $k$-patterns with their RU-lists for determining RHUPs. Thus, the RUP algorithm does not contain details of RU-tree generation. 


\begin{algorithm}
    \LinesNumbered
	\caption{\rm The RUP algorithm}
	\KwIn{\textit{D}, \textit{ptable}, $ \delta $, $minRe$, $minUtil$.}
	\KwOut{The set of recent high-utility patterns (RHUPs).}
	$ X.RUL \gets \emptyset, D.RUL \gets \emptyset, EUCS \gets \emptyset $\;
	scan $D$ once to calculate the  $ TWU(i) $ and $ re(i) $ values of each item $ i\in I$\;
	find $ I^* $ $ (TWU(i) \geq minUtil \times TU) \wedge (r(i) \geq minRe) $, i.e. $ RHTWUP^1 $\;
	sort $ I^* $ according to the total order $ \prec $\ (ascending order of $TWU$)\;
	scan $D$ to construct the \textit{X.RUL} of each item $ i\in I^{*} $ and build the \textit{EUCS}\;
	\textbf{call RHUP-Search}($\phi, I^*, minRe, minUtil, EUCS$)\;
	\Return \textit{RHUPs}
	\label{algorithm:RUP}
\end{algorithm}

The RHUP-Search procedure is given in Algorithm 2. It takes as input a pattern $X$, a set of child nodes of $X$ (\textit{extendOfX}), the $minRe$ and $minUtil$ thresholds and the \textit{EUCS}. This procedure  processes each pattern $X_a$ in the set \textit{extendOfX} to discover RHUPs (Lines 2-5). Two constraints are then applied to determine if child nodes of $X_a$  should be explored by the depth-first search (Lines 6-18). If a pattern is promising, the $ Construct(X, X_a, X_b) $ procedure (cf. Algorithm 3) is called to construct the RU-lists of all 1-extensions of $X_a$ (i.e., \textit{extendOfX}$_a $) (Lines 8-16). Each constructed 1-extension $X_{ab}$ of the pattern $X_a$ is put in the set \textit{extendOfX}$_a $ to be used by the later depth-first search. The \textit{RHUP-Search} procedure is then recursively called to continue the depth-first search (Line 17).

\begin{algorithm}
	\LinesNumbered
    \label{algorithm:search}
	\caption{The RHUP-Search procedure}
	\KwIn{$X$, ${extendOfX}$, $minRe$, $minUtil$, $EUCS$.}
	\KwOut{The set of RHUPs.}
	\For {each pattern $ X_{a}\in $ \textit{extendOfX}}
	{
		obtain the $ X_{a}.RE $, $ X_{a}.IU $ and $ X_{a}.RU $ values from the built $ X_{a}.RUL $\;
		\If{$ (X_{a}.IU \geq minUtil \times TU) \wedge (X_{a}.RE \geq minRe) $}
		{
			$ RHUPs\leftarrow RHUPs\cup X_{a} $\;	
		}
		\If {$ (X_{a}.IU + X_{a}.RU \geq minUtil \times TU) \wedge (X_{a}.RE \geq minRe) $}
		{
			$ extendOfX_{a}\leftarrow  \emptyset $\;
			\For {each $ X_{b}\in extendOfX $ such that $a \prec b$}
			{
				\If{$ \exists  TWU(ab)\in  EUCS  \wedge  TWU(ab)\geq minUtil \times TU $}
				{
					$ X_{ab}\leftarrow X_{a} \cup X_{b} $\;
					$ X_{ab}.RUL\leftarrow construct(X, X_{a}, X_{b}) $\;
					\If{$ X_{ab}.RUL \not= \emptyset \wedge (X_{a}.RE \geq minRe) $}
					{
						$ extendOfX_{a}\leftarrow extendOfX_{a}\cup X_{ab}.RUL $\;	
					} 			
				}	
			}   		
			\textbf{call RHUP-Search} $(X_{a}, extendOfX_{a}, minRe, minUtil, EUCS)$\;	
		} 
	}
	\Return \textit{RHUPs}
\end{algorithm}

\begin{algorithm}
	\LinesNumbered
    \label{algorithm:RU-list}
	\caption{The RU-list construction procedure}
	\KwIn{$ X $, a pattern; $ X_{a} $, an extension of $X$ with an item $a$; $ X_{b} $,  an extension of $X$ with an item $b$ ($ a\neq b$).}
	\KwOut{$ X_{ab}.RUL $, the RU-list of a pattern $ X_{ab} $.}
	set $ X_{ab}.RUL\leftarrow \emptyset$\;
	\For {each element $ E_{a}\in X_{a}.RUL $}
	{
		\If {$ \exists E_{a}\in X_{b}.RUL\wedge E_{a}.tid$ == $E_{b}.tid $}
		{
			\If{$ X.RUL\neq \emptyset $}
			{
				find $ E\in X.RUL$, $E.tid$ = $E_{a}.tid $\;			
				$E_{ab} \leftarrow <E_{a}.tid, E_{a}.re, E_{a}.iu+E_{b}.iu-E.iu, E_{b}.ru>$\;
			}
			\Else
			{
				$E_{ab} \leftarrow <E_{a}.tid, E_{a}.re, E_{a}.iu+E_{b}.iu, E_{b}.ru>$\;
			}
			$ X_{ab}.RUL\leftarrow X_{ab}.RUL \cup E_{ab} $\;
		}	
	}
	\Return $ X_{ab}.RUL $
\end{algorithm}

In summary, the RUP algorithm utilizes several pruning strategies and calculate tight upper-bounds on the utility of descendant nodes of each visited node, using the RU-list, to prune unpromising patterns. As a result, the search space and the time required to traverse the RU-tree can be  reduced. Based on the designed RU-tree and the top-down depth-first spanning mechanism, the developed RUP algorithm can directly mine all RHUPs in a database by scanning the database twice, and without generating and maintaining candidates. Although RUP adopts the Estimated Utility Co-occurrence Structure (\textit{EUCS}) which is originally proposed in FHM, note that RUP is different from FHM, as shown in the following aspects: 1) the mining task is different; 2) the used data structures (i.e., RU-tree, RU-list) are novel and different from utility-list;  3) the pruning strategies with both recency and utility as constraints are also different from FHM that only utilizes utility factor; and 4) the discovered results (RHUPs vs HUPs) are different.

\textbf{Complexity analysis}. The complexity of RUP is analyzed as follows. To keep consistent, we assume that there are $n$ transactions and $m$ distinct items
in the temper database $D$. RUP requires a single database scan and takes $O(n \times m) $ time in the worst case to calculate the  $ TWU(i) $ and $ re(i) $ values of each item $ i\in I$. Sorting $I^*$ needs $O(mlogm) $ time and constructing $|I^*|$ RU-lists of 1-pattern in $I^*$ needs $O(n \times m)$ time and $O(n\times I^*)$ space. Due to  the length of the longest transaction in the database is \textit{maxL} = $max\{|T_{q}|, T_q \in D\}$ and may be up to $m$. It indicates this longest transaction has all $m$ items such as \textit{maxL} = $m$, in the worst case. Thus,  there is up to $2^m$ - 1 patterns in the search space of RUP. Without loss of generality, we assume that all the $m$ items are promising and their RU-lists have $n$ entries. Thus, the worst case time complexity of the RUP algorithm  is  $O(2^{m} - 1)$ in theory.

\section{Experimental Results}

This section describes substantial experiments conducted to evaluate the effectiveness and efficiency of the proposed RUP algorithm. 
Note that only one study was previously published on the topic of mining recent HUPs. In that study, the UDHUP-apriori and UDHUP-list algorithms were presented \cite{lin2015efficient}. However, as previously mentioned, the concept of UDHUP patterns is quite different from the RHUP patterns considered in this paper. Thus, this experimental evaluation considers the well-known conventional HUPM algorithm (i.e., FHM \cite{fournier2014fhm}) as  benchmark to discover HUPs for evaluating the effectiveness of the proposed RUP algorithm. It is important to notice that RHUPM and traditional HUPM are two different mining tasks. Although the state-of-the-art HUPM algorithms (i.e., FHM \cite{fournier2014fhm}, HUP-Miner \cite{krishnamoorthy2015pruning}, EFIM \cite{zida2017efim}, HMiner \cite{krishnamoorthy2017hminer} and others) are available at the SPMF website\footnote{\url{http://www.philippe-fournier-viger.com/spmf/}}, but they are not suitable to be compared for evaluating the efficiency, i.e., execution time. Due to the requirement of reviewer, RUP was compared by some of these conventional HUPM algorithms. Furthermore, to evaluate the efficiency of the proposed pruning strategies, versions of the RUP algorithm without pruning strategies are also considered in this experimental evaluation. The notation RUP$\rm _{baseline} $,  RUP1, and RUP2 respectively indicates RUP without the pruning strategies 4 and 5, without strategy 5, and with all strategies.



The experiments were conducted on a personal ThinkPad T470p computer with an Intel(R) Core(TM) i7-7700HQ CPU @ 2.80 GHz 2.81 GHz, 32 GB of RAM, and with the 64-bit Microsoft Windows 10 operating system. Three real-life datasets obtained from the public FIMI repository\footnote{\url{http://fimi.uantwerpen.be/data/}} (retail, chess, and mushroom), foodmart \cite{foodmart}, and two synthetic datasets (T10I4D100K and T40I10D100K) were used in the experiments. 

\begin{itemize}
	\item \textbf{foodmart}: it contains 1,559 distinct items appearing in 21,556 customer transactions, where the average and maximal transaction length are respectively 4 and 11 items. 

	\item \textbf{retail}: this dataset is a large sparse dataset having 88,162 transactions and 16,470 distinct items, representing approximately five months of customer transactions. The average transaction length in this dataset is 10.3 items, and most customers have bought  7 to 11 items per shopping visit. 

	\item \textbf{chess}: this dense dataset contains the legal moves of a chess game, with 75 distinct items, 3,196 transactions, and an average transaction length of 37 items.

	\item \textbf{mushroom}: this dataset contains 119 distinct items, and 8,124 transactions, each having 23 items. This is a dense dataset.

	\item \textbf{T10I4D100K} and \textbf{T40I10D100K}: they were generated using the IBM Quest Synthetic Data Generator \cite{agrawal1994dataset} using various parameter values.

\end{itemize}

A simulation model \cite{liu2005two,tseng2013efficient,gan2017extracting} was developed to generate purchase quantities and unit profit values of items in transactions for all datasets except foodmart which already has real quantities and unit profit values. A log-normal distribution was used to randomly assign quantities in the [1,5] interval, and item profit values in the [1,1000] interval, as in previous work \cite{tseng2013efficient,gan2017extracting}. These datasets have varied characteristics and represents the main types of data typically encountered in real-life scenarios (dense, sparse and long transactions). 

\subsection{Pattern Analysis}

The first experiment consisted of comparing the derived HUP and RHUP patterns  using a fixed time-decay threshold $\delta$ to analyze the relationship between these two types of patterns, and  determine whether the proposed RHUPM framework is acceptable. Note that HUPs are discovered using the FHM algorithm, and  RHUPs are found by the proposed RUP algorithm. A recency ratio of high-utility patterns  is defined as: \textit{recencyRatio} = $\frac{|RHUPs|}{|HUPs|} \times 100\% $. Correspondingly, an outdated ratio is defined as \textit{outdatedRatio}  = $\frac{|HUPs - RHUPs|}{|HUPs|} \times 100\% $ = ( 100\% - \textit{recencyRatio}). In general, the outdated patterns do not make sense for decision making, pattern-based prediction and classification. Thus, \textit{recencyRatio} but not \textit{outdatedRatio} can facilitate the process of discovering and validating the patterns by user.

In this experiment, to make the comparison fair,  the $minRe$ threshold was first fixed and the \textit{minUtil} threshold was varied. The parameters were set as follows: foodmart ($ \delta $ = 0.001\%, $ minRe $ = 1), retail ($ \delta $ = 0.01\%, $ minRe $ = 10), chess ($ \delta $ = 0.5\%, $ minRe $ = 60), mushroom ($ \delta $ = 0.1\%, $ minRe $ = 15), T10I4D100K ($ \delta $ = 0.01\%, $ minRe $ = 5), and T40I10D100K ($ \delta $ = 0.05\%, $ minRe $ = 20).  Then, the \textit{minUtil} threshold was fixed and the $minRe$ threshold was varied. The parameters were  set as follows: foodmart ($ \delta $ = 0.001\%, \textit{minUtil} = 0.001\%), retail ($ \delta $ = 0.01\%, \textit{minUtil} = 0.015\%), chess ($ \delta $ = 0.5\%, \textit{minUtil} = 18\%), mushroom ($ \delta $ = 0.1\%, \textit{minUtil} = 7\%), T10I4D100K ($ \delta $ = 0.01\%, \textit{minUtil} = 0.005\%), and T40I10D100K ($ \delta $ = 0.05\%, \textit{minUtil} = 0.2\%). The results in terms of number of HUPs and RHUPs for various \textit{minUtil} values and a fixed  $minRe$ threshold, and for various $minRe$  values and a fixed \textit{minUtil} threshold, are respectively shown in Table \ref{table:patterns1} and Table \ref{table:patterns2}, respectively.

\begin{table}[htb]
	\fontsize{6.5pt}{10pt}\selectfont
	\centering
	\caption{Number of patterns when \textit{minUtil} is varied and $minRe$ is fixed.}
	\label{table:patterns1}
	\begin{tabular}{ccllllllllll}
		\hline\hline
		\multirow{2}*{\textbf{Dataset}}&
		\multirow{2}*{\textbf{Notation}}
		&\multicolumn{6}{c}{\textbf{\# pattern under various thresholds}}\\
		\cline{3-8}
		& & \textit{test}$_1$ &  \textit{test}$_2$  &  \textit{test}$_3$   &  \textit{test}$_4$  &  \textit{test}$_5$   &  \textit{test}$_6$  \\ \hline

		 & \textit{minUtil} & 0.002\% & 0.003\% & 0.004\%  &  0.005\%  & 0.006\%  & 0.007\%  \\
	\textbf{(a) foodmart}		& HUPs &  492,041 & 	386,509 & 	267,164 & 	165,304 & 	93,467 & 	49,848  \\
	($\delta$: 0.001\%, \textit{minRe}: 1)	& RHUPs &  41,539 & 	38,552 & 	33,188 & 	26,788 & 	20,515 & 	15,238  \\
		& \textit{recencyRatio} &  8.44\% & 	9.97\% & 	12.42\% & 	16.21\% & 	21.95\% & 	30.57\%  \\	    
		\hline


 & \textit{minUtil} & 0.013\% & 0.014\% & 0.015\%  &  0.016\%  & 0.017\%  & 0.018\%  \\
\textbf{(b) retail} & HUPs &  2489,385	&	22,657	&	15,713	&	14,126	&	12,813	&	11,705 \\
($\delta$: 0.01\%, \textit{minRe}: 10)	& RHUPs & 7,203	6,965	&	6,732	&	6,499	&	6,299	&	6,080   \\
& \textit{recencyRatio} &  0.29\%	&	30.74\%	&	42.84\%	&	46.01\%	&	49.16\%	&	51.94\%  \\	    
\hline


 & \textit{minUtil} & 17\% & 18\% & 19\%  &  20\%  & 21\%  & 22\%  \\
\textbf{(c) chess}	& HUPs & 198,921  &	89,933  &	39,281  &	16,848  &	7,141  &	2,969   \\
 ($\delta$: 0.5\%, \textit{minRe}: 60) & RHUPs &  94,666  &	48,809  &	23,798  &	14,177  &	5,070  &	2,249  \\
& \textit{recencyRatio} & 47.59\%  &	54.27\%  &	60.58\%  &	84.15\%  &	71.00\%  &	75.75\% \\	    
\hline


 & \textit{minUtil} & 5\% & 6\% & 7\%  &  8\%  & 9\%  & 10\%  \\
\textbf{(d) mushroom} 	& HUPs & 176,549 &	34,431 &	34,331 &	22,121 &	13,953 &	7,601  \\
($\delta$: 0.1\%, \textit{minRe}: 15) & RHUPs & 79,577 &	30,868 &	15,321 &	9,714 &	5,745 &	2,886  \\
& \textit{recencyRatio} &  45.07\% &	89.65\% &	44.63\% &	43.91\% &	41.17\% &	37.97\%  \\	    
\hline


 & \textit{minUtil} & 0.005\% & 0.006\% & 0.007\%  &  0.008\%  & 0.009\%  & 0.010\%  \\
\textbf{(e) T10I4D100K}	& HUPs &  4,168,007 &	310,769 &	148,090 &	114,938 &	94,448 &	80,939  \\
 ($\delta$: 0.01\%, \textit{minRe}: 5) & RHUPs &  52,230 &	50,841 &	49,495 &	48,158 &	46,921 &	45,603   \\
& \textit{recencyRatio} & 1.25\% &	16.36\% &	33.42\% &	41.90\% &	49.68\% &	56.34\%  \\	    
\hline


 & \textit{minUtil} & 0.15\% & 0.20\% & 0.25\%  &  0.30\%  & 0.35\%  & 0.40\%  \\
\textbf{(f) T40I10D100K} 	& HUPs &  790,225 &	143,591 &	35,676 &	6,275 &	2,654 &	1,429  \\
($\delta$: 0.05\%, \textit{minRe}: 20) & RHUPs & 80,191 &	17,936 &	7,579 &	4,541 &	2,643 &	1,427  \\
& \textit{recencyRatio} & 10.15\% &	12.49\% &	21.24\% &	72.37\% &	99.59\% &	99.86\% \\	    
\hline

		\hline	\hline
	\end{tabular}
\end{table}

\begin{table}[htb]
	\fontsize{6.5pt}{10pt}\selectfont
	\centering
	\caption{Number of patterns when $minRe$ is varied and \textit{minUtil} is fixed.}
	\label{table:patterns2}
	\begin{tabular}{ccllllllllll}
		\hline\hline
		\multirow{2}*{\textbf{Dataset}}&
		\multirow{2}*{\textbf{Notation}}
		&\multicolumn{6}{c}{\textbf{\# pattern under various thresholds}}\\
		\cline{3-8}
		& & \textit{test}$_1$ &  \textit{test}$_2$  &  \textit{test}$_3$   &  \textit{test}$_4$  &  \textit{test}$_5$   &  \textit{test}$_6$  \\ \hline

		
		 & \textit{minRe} & 0.80 & 0.85 & 0.90  &  0.95  & 1.00  &  1.05  \\
		\textbf{(a) foodmart}	& HUPs &  492,041 &	492,041	 & 492,041 &	492,041 &	492,041 &	492,041   \\
	($\delta$: 0.001\%, \textit{minUtil}: 0.001\%)	& RHUPs & 492,041 &	386,976 &	267,034	 & 154,384 &	41,539 &	41,539  \\
		& \textit{recencyRatio} &  100.00\% &	78.65\% &	54.27\% &	31.38\% &	8.44\% &	8.44\%  \\	    
		\hline

		
		 & \textit{minRe} & 5 & 6 & 7  & 8  & 9  & 10 \\
		\textbf{(b) retail}	& HUPs &  15,713 &	15,713 &	15,713 &	15,713 &	15,713 &	15,713 \\
		($\delta$: 0.01\%, \textit{minUtil}: 0.015\%) & RHUPs & 11,444 &	10,307 &	9,233 &	8,320 &	7,456 &	6,732 \\
		& \textit{recencyRatio} & 72.83\% &	65.60\% &	58.76\% &	52.95\% &	47.45\% &	42.84\% \\	    
		\hline

		
	 & \textit{minRe} & 20 & 40 & 60  &  80  & 100  &  120  \\
		\textbf{(c) chess}	& HUPs & 89,933 &	89,933 &	89,933 &	89,933 &	89,933 &	89,933 \\
		($\delta$: 0.5\%, \textit{minUtil}: 18\%)	& RHUPs & 89,383 &	79,253 &	48,809 &	18,873 &	3,949 &	446  \\
		& \textit{recencyRatio} & 99.39\% &	88.12\% &	54.27\% &	20.99\%	 & 4.39\%  &	0.50\% \\	    
		\hline

		
	 & \textit{minRe} & 5  &  10  & 15  &  20   &  25  &  30  \\
		\textbf{(d) mushroom}	& HUPs & 34,331 &	34,331 &	34,331 &	34,331 &	34,331 &	34,331   \\
		($\delta$: 0.1\%, \textit{minUtil}: 7\%)	& RHUPs & 33,762 &	31,862 &	15,321 &	14,618 &	13,766 &	12,892   \\
		& \textit{recencyRatio} & 98.34\% &	92.81\% &	44.63\% &	42.58\% &	40.10\% &	37.55\%  \\	    
		\hline

		
	& \textit{minRe} & 1 & 2 & 3  &  4  & 5  & 6  \\
		\textbf{(e) T10I4D100K} 	& HUPs &  4,168,007 &	4,168,007 &	4,168,007 &	4,168,007 &	4,168,007 &	4,168,007   \\
		($\delta$: 0.01\%, \textit{minUtil}: 0.005\%)	& RHUPs &  4,050,086 &	112,383 &	79,782 &	61,910 &	52,230 &	45,276  \\
		& \textit{recencyRatio} & 97.17\% &	2.70\% &	1.91\% &	1.49\% &	1.25\% &	1.09\% \\	    
		\hline

		
	 & \textit{minRe} & 10 & 12 & 14  &  16  & 18  & 20  \\
		\textbf{(f) T40I10D100K}	& HUPs &  143,591 &	143,591 &	143,591 &	143,591 &	143,591 &	143,591 \\
		($\delta$: 0.05\%, \textit{minUtil}: 0.2\%)	& RHUPs & 141,346 &	135,215 &	118,684 &	37,872 &	25,955 &	17,936  \\
		& \textit{recencyRatio} & 98.44\% &	94.17\% &	82.65\% &	26.37\% &	18.08\% &	12.49\%  \\	    
		\hline

		\hline 	\hline
	\end{tabular}
\end{table}

In Table \ref{table:patterns1} and Table \ref{table:patterns2}, it can be observed that for normal, sparse or dense datasets, the number of RHUPs is always smaller than the number of HUPs under different parameter settings. 
The reason is that although numerous HUPs are discovered when only the utility of  patterns is considered, few HUPs are  recent patterns since the recency constraint is ignored in traditional HUPM. For example, on foodmart ($ \delta $ = 0.001\%, $ minRe $ = 1.0), when $ minUtil $ is set to 0.002\%,  492,041 HUPs are found but only 41,539 are RHUPs, and when $ minUtil $ is set to 0.007\%, 49,848 HUPs are found but only 15,238/49,848 = 30.57\% is the interesting RHUPs. This  indicates that many HUPs have low recency values. Hence, those HUP patterns may not be helpful for decision-making (e.g., for a manager or retailer) and may be generally uninteresting for users in real-world applications. As  previously explained, the concept of RHUP patterns was introduced in this paper to find patterns  that are both highly profitable and have  recently appeared in transactions. Thus, the discovered RHUPs can be considered more valuable than the HUPs found by traditional HUPM algorithms for  tasks such as product recommendation and promotion.

It can be observed that the compression achieved by mining RHUPs instead of HUPs, indicated by the \textit{recencyRatio}, is very high when the \textit{minUtil} or $minRe$ thresholds are varied. For instance, on the T10I4D100K dataset with $minRe =  6$,  a low \textit{recencyRatio} of 1.09\% is obtained (as shown in Table \ref{table:patterns2}), which means that numerous redundant and meaningless patterns have been effectively eliminated (outdated patterns). In other words, fewer patterns are found by mining HUPs, but those patterns are up-to-date. As more constraints are applied in the mining process, fewer but more meaningful patterns are discovered. It can also be observed that the \textit{recencyRatio}/\textit{outdatedRatio} produced by the RUP algorithm increases/decreases when \textit{minUtil}/\textit{minRe} is increased/decreased. More specifically, the number of HUPs remains unchanged when \textit{minRe} is increased, while the number of RHUPs decreases. For example, on the chess dataset as shown in Table \ref{table:patterns1}(c), when \textit{minUtil} is increased from 17\% to 22\%, the numbers of patterns generated by RUP-based algorithms decreases from 94,666 to 2,249 patterns, and the \textit{recencyRatio} increases from 47.59\% to 75.75\%. Furthermore, when \textit{minUtil} is fixed and \textit{minRe} is varied from 20 to 120, as shown in Table \ref{table:patterns2}(c), the number of HUPs remains 89,933, and the \textit{recencyRatio} decreases from 99.39\% to 0.50\%. This is reasonable since  traditional HUPM  does not consider the minimum recency threshold, while mining RHUPs is done by considering both the utility and recency constraints. It is thus not surprising that fewer RHUPs are produced for high recency threshold values and a fixed minimum utility threshold. The above observations generally hold when the minimum utility threshold is set to large values. When the \textit{minRe} threshold is increased, fewer RHUPs are produced by the proposed RUP algorithms compared to the number of HUPs discovered by traditional HUPM algorithms. This is because the proposed algorithms discover RHUPs by considering both the recency and utility measures. The FHM algorithm however only considers the utility measure to discover HUPs. From these results, it can be concluded that few interesting patterns satisfy high minimum recency threshold values (RHUPs). Thus, a high \textit{recencyRatio} can be obtained when the user-specified minimum recency threshold is high.

Therefore, the analysis of patterns found by the proposed RUP algorithm and existing HUPM algorithm has shown that the RHUPM task can effectively discover fewer but more useful and up-to-date HUPs than the traditional HUPM framework, which only considers the minimum utility threshold. The proposed algorithm can not only dramatically reduce the number of patterns found, but also make the high-utility pattern mining task more suitable for real-life applications.

\subsection{Runtime Analysis}

In the second experiment,  the performance of three variants of the proposed RUP algorithm was compared in terms of runtime. Three conventional algorithms of HUPM (including FHM, HMiner, and EFIM) were also compared as a benchmark. Parameters were set as in the previous subsection. Results when the minimum utility threshold is varied and the minimum recency threshold is fixed, and when $minRe$ is fixed and \textit{minUtil} is varied, are respectively shown in Fig. \ref{fig:time1} and Fig. \ref{fig:time2}.

\begin{figure*}[!hbtp]	
	\centering
	\includegraphics[trim=60 0 50 0,clip,scale=0.37]{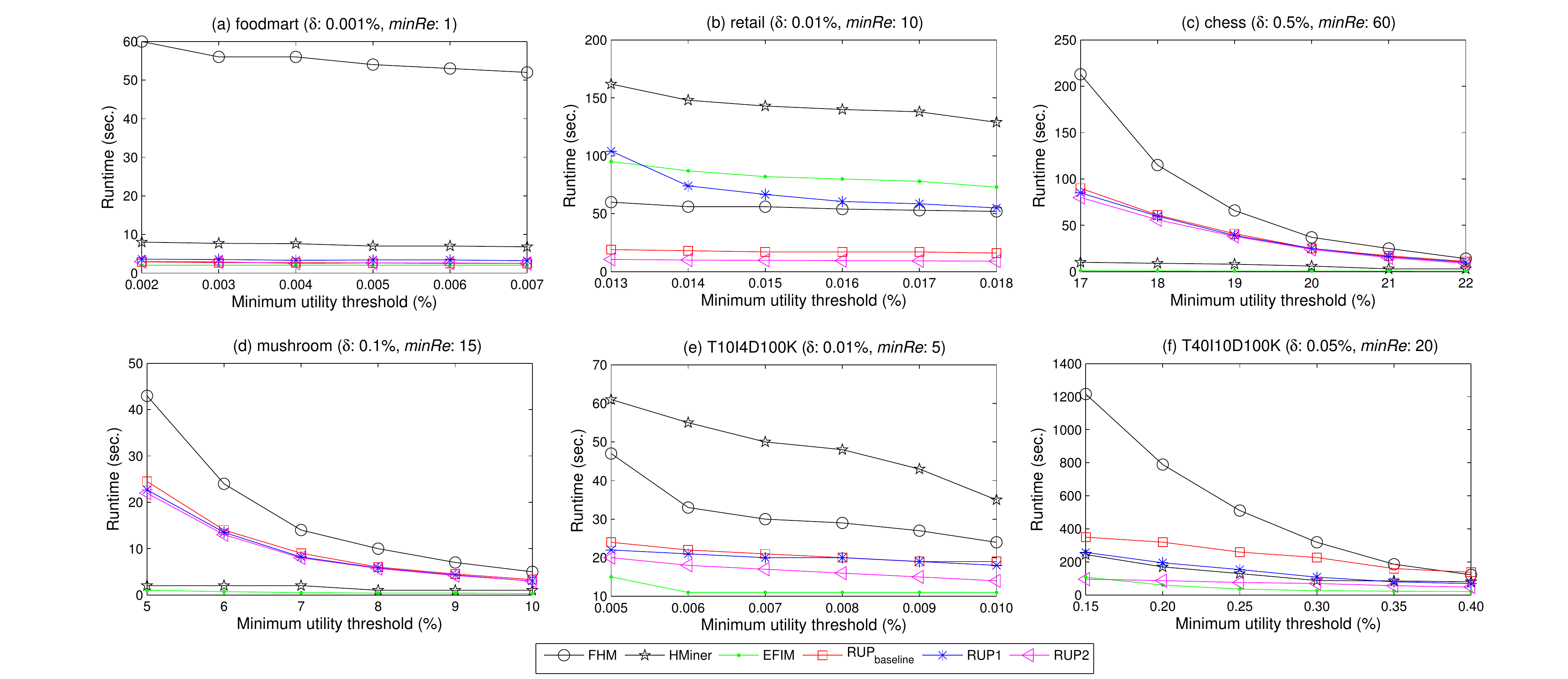}
	\caption{Runtime when \textit{minUtil} is varied and $minRe$ is fixed.}
	\label{fig:time1}
\end{figure*}

In Fig. \ref{fig:time1}, it can be observed that the proposed RUP-tree based algorithms outperforms the FHM algorithm, and that  versions of the RUP algorithms integrating  additional pruning strategies outperform the baseline RUP$\rm _{baseline} $ algorithm when \textit{minUtil} is varied and \textit{minRe} is fixed. In general, RUP2 is about one to two times faster than FHM, but slower than the state-of-the-art EIFM algorithm. This is reasonable since both the utility and recency constraints are taken into account by the RU-list-based RUP algorithm to find RHUPs, while only the utility constraint is considered by the utility-list-based FHM algorithm to find HUPs. When more constraints are applied, the search space can be further reduced and fewer  patterns can be discovered.  Besides, the pruning strategies used in the two improved algorithms are more efficient than those used by the baseline RUP$\rm _{baseline} $ algorithm. The GDC and CDC properties are much more effective at pruning the search space than  RUP$\rm _{baseline} $ and other conventional HUPM algorithms.  When \textit{minUtil} is decreased, it can be further observed that the gap in terms of runtime between FHM and the proposed RUP-based algorithms sharply increases, while the runtimes of the  three proposed algorithms are similar. For example, on the retail dataset and \textit{minUtil} = 0.015\% (as shown in Fig. \ref{fig:time1}(b)), the runtime of HMiner is 143 seconds, but the three other RUP-based algorithms only require 17 seconds, 10 seconds and 9.8 seconds; the gap is quite larger than when \textit{minUtil} is set to 0.013\%. The reason is that  more HUPs can be discovered for lower \textit{minUtil} values, but most of these patterns are not up-to-date, so the runtime of HMiner and FHM is greater than  RUP-based algorithms, which aim at discovering recent patterns.

\begin{figure*}[!hbtp]
	\centering
	\includegraphics[trim=60 0 60 0,clip,scale=0.37]{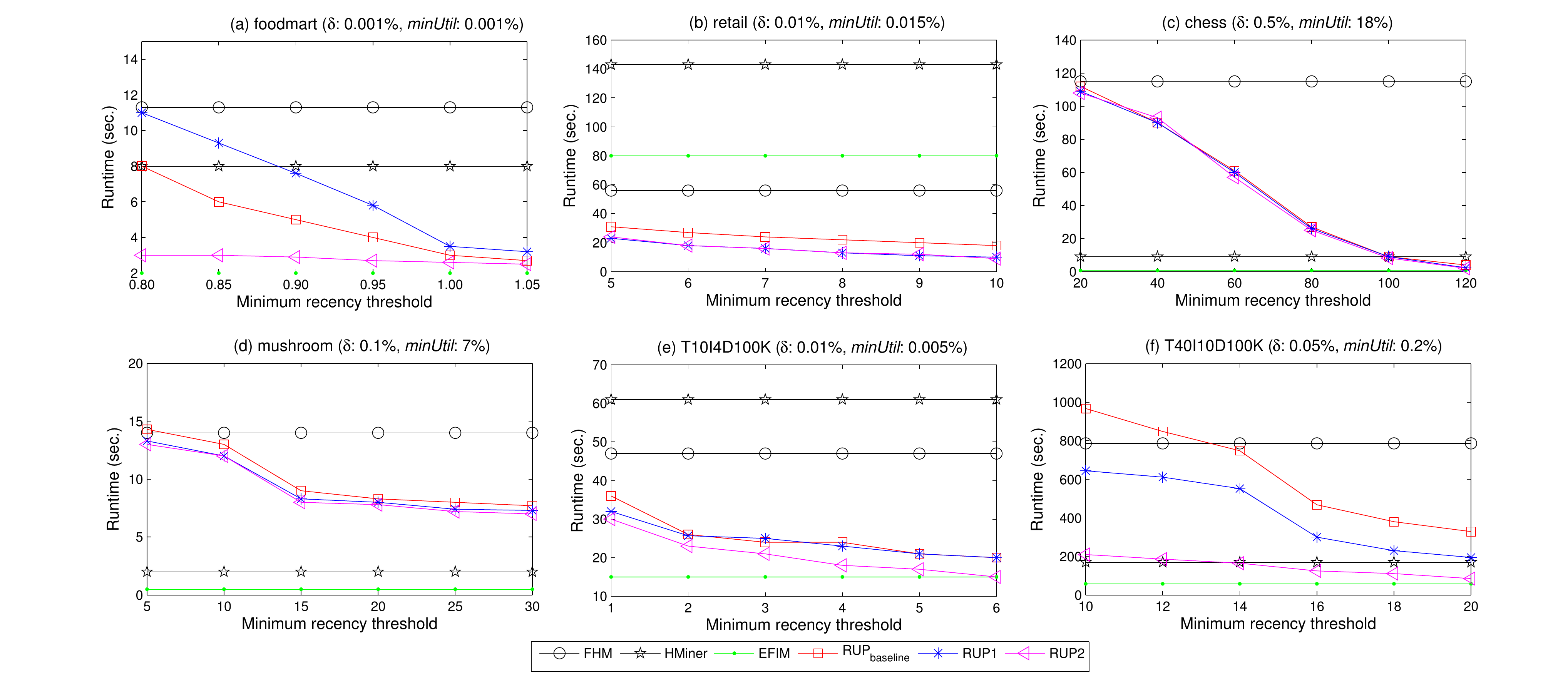}
	\caption{Runtime when \textit{minRe} is varied and \textit{minUtil} is fixed.}
	\label{fig:time2}
\end{figure*}

In Fig. \ref{fig:time2}, it can be seen that both the proposed RUP1 and RUP2 algorithms outperform the RUP$\rm _{baseline} $  and FHM  for various \textit{minRe} values, except for the foodmart dataset. It can be further observed that when \textit{minRe} is increased, the runtime of the two proposed algorithms remain steady, while the runtimes of the RUP1 and RUP2 algorithms sharply decrease,  and the runtime of the FHM algorithm remains the same. For example, consider the results obtained for the dense chess dataset, depicted in Fig. \ref{fig:time2}(c). When the \textit{minRe}  threshold is varied from 20 to 120, the runtime of the FHM algorithm remains 115 seconds, while the runtimes of the RUP$\rm _{baseline} $, RUP1 and RUP2 algorithms decrease from about 112 seconds to 2 seconds.  More specifically, the two improved algorithms are generally faster than the FHM algorithm. This is reasonable for two reasons.  On one hand, when the \textit{minRe}  threshold is set high, the search space can be reduced and fewer RHUPs are produced by the RUP algorithms, while the traditional algorithms (i.e., FHM, HMiner, and EFIM) for mining HUPs is not influenced by the minimum recency threshold. Hence, discovering RHUPs is faster when the \textit{minRe}  threshold is set higher, but the runtime of FHM, HMiner, and EFIM remains the same. On the other hand, based on the proposed RU-tree and RU-list structures, the partial anti-monotonicity provided by the CDC property and the global anti-monotonicity of the GDC property are more effective at pruning the search space,  and thus the runtime of the proposed RUP algorithm is greatly reduced.

In Fig. \ref{fig:time1}(a) and Fig. \ref{fig:time2}(a), it can be seen that for the foodmart dataset, the RUP$\rm _{baseline} $ algorithm performs slightly better than the RUP1 algorithm, and that both of them are slower than the RUP2 algorithm. The reason is that for very sparse datasets such as foodmart, where the average transaction length is 4.4 items, the $TWU$ upper-bound of each transactions is close to the utility of each item in the transaction. Thus, numerous unpromising patterns can be directly pruned by the \textit{TWDC} property in the RUP$\rm _{baseline} $ algorithm. In these cases, the EUCP strategy does not allow pruning much more patterns, but constructing the EUCS is time-consuming. Hence, the enhanced RUP1 algorithm is slightly slower than the baseline algorithm (which does not utilize the EUCP strategy) on the foodmart dataset. For  very dense datasets (e.g., chess and mushroom), the two enhanced algorithms have similar performance, as shown in Fig. \ref{fig:time1}(c), Fig. \ref{fig:time1}(d), Fig. \ref{fig:time2}(c), and Fig. \ref{fig:time2}(d). This is reasonable because all 2-patterns in the \textit{EUCS} have high $TWU$ values for very dense datasets. In this case, the EUCP strategy  cannot efficiently reduce the search space. Therefore, it can be concluded that the EUCP strategy is inefficient for  very sparse or very dense datasets. In summary, the proposed RU-list-based RUP algorithm performs well compared to the FHM algorithm for HUPM, and the improved versions of the RUP algorithm are faster than the baseline algorithm in most cases.

\subsection{Effect of Pruning Strategies}

In a third experiment, we  evaluated  the effectiveness of the novel pruning strategies, integrated in the designed RUP algorithms. Henceforth, the numbers of  nodes visited in the RU-tree by the RUP$\rm _{baseline} $, RUP1, and RUP2 algorithms are respectively denoted as $N_1$, $N_2$, and $N_3$. Experimental results when \textit{minUtil} is varied and \textit{minRe} is fixed are shown in Fig. \ref{fig:nodes1}, and  results when \textit{minRe} is varied and \textit{minUtil} is fixed are shown in Fig. \ref{fig:nodes2}.

\begin{figure*}[!hbtp]
	\centering
	\includegraphics[trim=60 0 60 0,clip,scale=0.38]{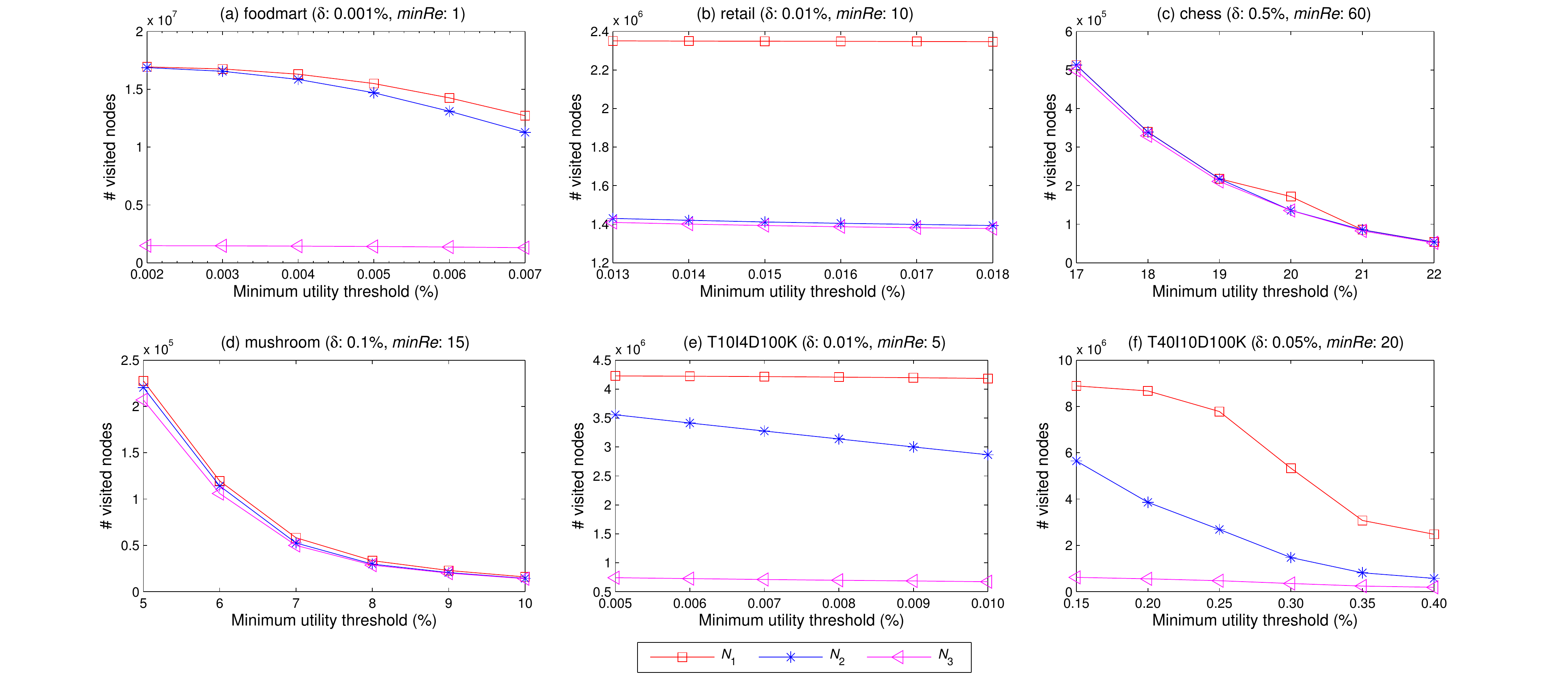}
	\caption{Number of visited nodes when \textit{minUtil} is varied and $minRe$ is fixed.}
	\label{fig:nodes1}
\end{figure*}

\begin{figure*}[!hbtp]	
	\centering
	\includegraphics[trim=60 0 60 0,clip,scale=0.38]{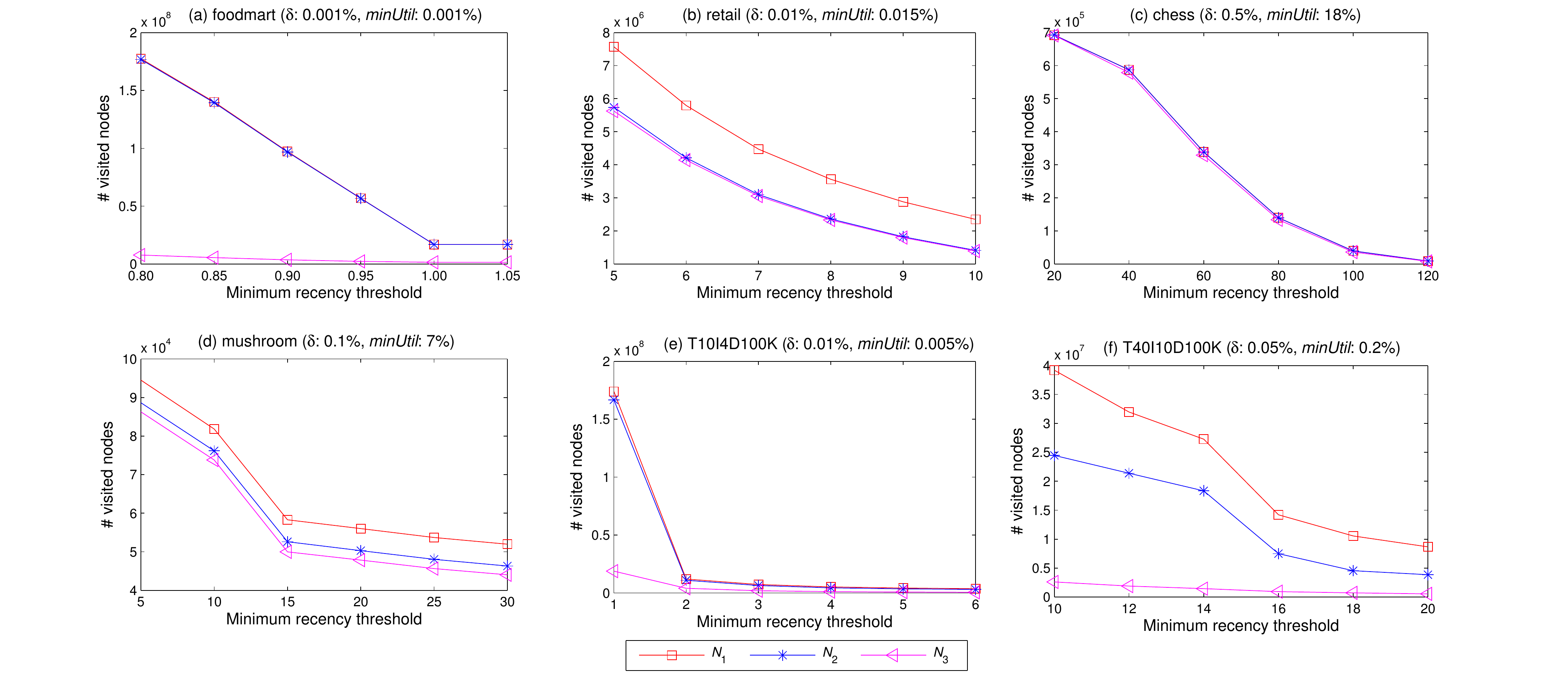}
	\caption{Number of visited nodes when \textit{minRe} is varied and \textit{minUtil} is fixed.}
	\label{fig:nodes2}
\end{figure*}

In Fig. \ref{fig:nodes1}, it can be observed that the various pruning strategies can reduce the search space represented by the RU-tree. In addition, it can also be observed that the pruning strategy 1, which relies on the \textit{TWU} upper-bound and recency values, can  prune several unpromising candidates early. This is very useful since the construction of numerous RU-lists of items and their supersets can be avoided. For example, there are 6470 distinct items in the retail dataset. Thus,  $ 2^{6470} - 1 $ nodes are in the search space for this dataset. When \textit{minRe} = 10 and \textit{minUtil} = 0.015\%, however, the total number of visited nodes by the RUP$\rm _{baseline} $, RUP1, and RUP2 algorithms for the retail dataset are respectively 2350414, 1411859 and 1393223, which is quite less than $ 2^{6470}$ - 1.  Moreover,  pruning strategies 4 and 5 also  play a positive role in pruning unpromising patterns on most datasets when \textit{minUtil} is varied. In both Fig. \ref{fig:nodes1} and Fig. \ref{fig:nodes2}, the relationship $N_1 \geq N_2 \geq N_3 $ always hold. Given these results, the developed pruning strategies used in the proposed RUP algorithms can be considered effective.

It can also be concluded that the pruning strategy 5, integrated in the RUP2 algorithm, can  prune a huge number of unpromising patterns when applied on sparse datasets, as shown on foodmart. Contrarily to strategy 5, for very sparse and dense datasets,  the pruning strategy 4 is no longer effective for pruning unpromising patterns since  those are efficiently filtered using strategies 1 to 3, as shown on foodmart (Fig. \ref{fig:nodes1}(a) and Fig. \ref{fig:nodes2}(a)), chess (Fig. \ref{fig:nodes1}(c) and Fig. \ref{fig:nodes2}(c)), and mushroom (Fig. \ref{fig:nodes1}(d) and Fig. \ref{fig:nodes2}(d)). For these datasets, the number $ N_1 $ is close to $N_2$ but never smaller. Consequently, it is empirically observed that the relationship $N_1 \geq N_2 \geq N_3 $ holds on all datasets and for various parameter values.

\subsection{Scalability}

In this subsection, the scalability of the three proposed  algorithms is compared on the synthetic T10I4N4KD$|\textit{X}|$K dataset. This dataset has been generated using the IBM Quest synthetic data generator \cite{agrawal1994dataset} with various number of transactions $X$ (from 100K to 500K, with an increment of 100K w.r.t. 100,000 transactions in each test). Note that the parameters $\delta$, \textit{minRe} and \textit{minUtil} are set in this experiment to 0.0001, 5, and 0.005\%, respectively. The results in terms of runtime, memory consumption, the number of derived patterns, and the number of visited nodes, are shown in Fig. \ref{fig:scalability}(a) to Fig. \ref{fig:scalability}(d), respectively.

\begin{figure}[!htbp]
	\centering
	\includegraphics[trim=50 20 40 5,clip,scale=0.48]{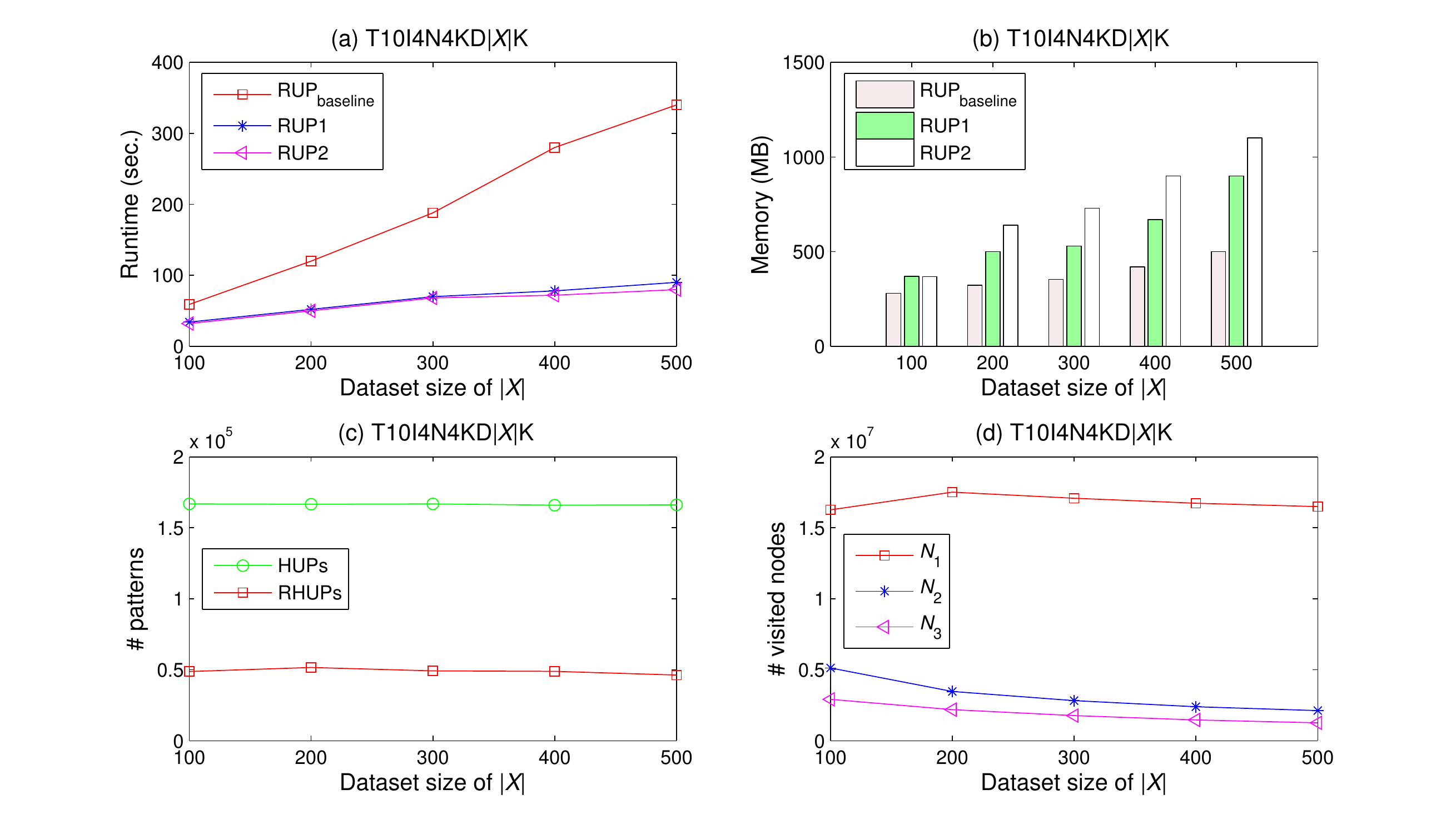}
	\caption{Scalability results.}
	\label{fig:scalability}
\end{figure}

In Fig. \ref{fig:scalability}, it can be observed that the three proposed  algorithms always have good scalability for various dataset sizes, in terms of runtime, memory consumption, and  number of visited nodes in the RU-tree. The baseline algorithm consumes the smallest amount of memory  but has the longest runtime, and its actual search space in the RU-tree ($N_1$) is also the largest. The improved RUP2 algorithm is always faster than the baseline and RUP1 algorithms but consumes slightly more memory than RUP1. Moreover, it is observed that the runtimes of the three variants of proposed algorithm increase linearly as the dataset size is increased, as shown in Fig. \ref{fig:scalability}(a). Hence, the proposed algorithm scales well for large-scale datasets. Clearly, as the size of the database increases, the overall tree construction and mining time increases. From the above results, it can be concluded that the proposed RUP approach and its improved versions have  acceptable performance for real-world applications.

\section{Conclusions}

Up-to-date knowledge is more interesting and useful than outdated knowledge. How to efficiently and effectively mining of utility-driven trend information is challenge. In this paper, an efficient utility mining algorithm named RUP has been designed to discover recent high-utility patterns (RHUPs) in temporal databases by taking into account both the recency and utility constraints. This addresses an important limitation of traditional HUPM algorithms that is to produce numerous invalid and outdated patterns. To discover RHUPs, this paper has proposed a compact recency-utility tree (RU-tree) structure, used to store the information about patterns that is required to discover RHUPs.  The RUP algorithm performs a depth-first search to explore the conceptual RU-tree and construct the RU-lists.  A substantial experimental evaluation has shown that the proposed RUP algorithm and its improved versions can efficiently identify a set of recent high-utility patterns in time-sensitive databases. The developed RUP-based algorithms perform better than the conventional utility mining algorithms. Moreover, it was observed that  the two improved variants perform better than the baseline one.

\section{Acknowledgments}

We would like to thank the anonymous reviewers for their detailed comments and constructive suggestions for this paper. This research was partially supported by the Shenzhen Technical Grant No. KQJSCX 20170726103424709 and No. JCYJ 20170307151733005, the China Scholarships Council (CSC), and the UH-nett Vest Grant No. 6000474 which is supported by the Western Norway University of Applied Sciences.


\bibliographystyle{journalformat}
\bibliography{main}

\end{document}